\theoremstyle{plain} 
\newtheorem{thm}{Theorem}
\newtheorem{lem}{Lemma}
\theoremstyle{definition}
\newtheorem{defn}{Definition}
\theoremstyle{remark}
\newcommand{\prob}{\mathsf{P}}
\newcommand{\unif}{{\sf Unif}}
\newcommand{\nm}{{\sf N}}
\newcommand{\chisq}{{\sf ChiSq}}
\newcommand{\RR}{\mathbb{R}}
\newcommand{\Z}{\mathscr{Z}}
\newcommand{\TT}{\mathbb{T}}
\newcommand{\PP}{\mathbb{P}}
\newcommand{\GG}{\mathbb{G}}
\newcommand{\model}{\mathscr{P}}
\newcommand{\eps}{\varepsilon}
\title{Direct Gibbs posterior inference on risk minimizers: construction, concentration, and calibration}
\author{Ryan Martin\footnote{Department of Statistics, North Carolina State University, {\tt rgmarti3@ncsu.edu}} \; and \; Nicholas Syring\footnote{Department of Statistics, Iowa State University, {\tt nsyring@iastate.edu}}}
\date{\today}
\begin{document}

\maketitle 

\begin{abstract}    
Real-world problems, often couched as machine learning applications, involve quantities of interest that have real-world meaning, independent of any statistical model.  To avoid potential model misspecification bias or over-complicating the problem formulation, a direct, model-free approach is desired.  The traditional Bayesian framework relies on a model for the data-generating process so, apparently, the desired direct, model-free, posterior-probabilistic inference is out of reach.  Fortunately, likelihood functions are not the only means of linking data and quantities of interest.  Loss functions provide an alternative link, where the quantity of interest is defined, or at least could be defined, as a minimizer of the corresponding risk, or expected loss.  In this case, one can obtain what is commonly referred to as a {\em Gibbs posterior distribution} by using the empirical risk function directly.  This manuscript explores the Gibbs posterior construction, its asymptotic concentration properties, and the frequentist calibration of its credible regions.  By being free from the constraints of model specification, Gibbs posteriors create new opportunities for probabilistic inference in modern statistical learning problems.  

\smallskip

\emph{Keywords and phrases:} asymptotics; empirical risk minimization; Bayesian inference; learning rate; M-estimation; model misspecification; statistical learning.
\end{abstract}

\section{Introduction}
\label{S:intro}

A hallmark of the Bayesian framework is that it is normative.  That is, when presented with a new problem, a Bayesian can immediately carry out his analysis by, first, introducing a statistical model for the data, which entails a likelihood function and, second, introducing a prior distribution for the unknown parameters in that model.  Given these two inputs, the Bayesian can apply the familiar conditional probability formula to get a posterior distribution---proportional to the likelihood times prior---for the unknown, given the observed data, from which he can draw his inferences.  

Of the two required inputs, the prior attracts the most criticism from non-Bayesians.  But the requirement that a Bayesian must specify a likelihood can also be a serious obstacle in applications.  A major concern is that the statistical model may not be correctly specified.  In such a case, the model parameters have no real-world interpretation and, therefore, any inferences about them would be downright meaningless.  Marginal inference on certain features of the model parameters can still be carried out, but there is no reason to expect this to be reliable---in fact, inferences can be arbitrarily poor, as we discuss below.  
%If the statistical model is misspecified, then the model parameters and their corresponding prior distributions have no real-world meaning.  Inference on the parameter of interest may still be carried out via Bayesian marginalization, but there is no reason to believe that such inferences are reliable or even unbiased.} 
To address this concern, there have been substantial efforts to develop the subject of {\em Bayesian nonparametrics} \citep[e.g.,][]{hhmw2010, ghoshramamoorthi, ghosal.vaart.book}, which treats a key feature of the data-generating process, such as its density function, as the unknown about which inferences are to be drawn.  From a posterior for, say, the density function, marginal inference about any other relevant feature is straightforward, at least conceptually.  No doubt the nonparametric formulation makes the Bayesian's model more flexible and, consequently, his inferences more robust. Whether the robustness gained by going fully nonparametric is worth the added complexity is a question that deserves consideration, but that would have to be addressed on a case-by-case basis. In any case, the parametric and nonparametric formulations have an important point in common, namely, that the likelihood must fully specify everything about the data-generating process; that is, for given values of the unknowns, whether they be finite- or infinite-dimensional, new data could, at least in principle, be simulated according to the posited model.  This aspect of the Bayesian framework is restrictive when, like in those cases presented below, the unknown quantity to be inferred exists independent of or does not fully determine a statistical model.  For example, imagine the quantity of interest is a (conditional) quantile or, more generally, a minimizer of an expected loss function.  In such cases, the Bayesian framework offers no direct path to make posterior inference: only an indirect path through a model/likelihood specification and marginalization is possible.  What a non-Bayesian approach lacks in normativity compared the Bayesian approach it makes up for in its ability to directly infer relevant features of---rather than everything about---the data-generating process. 
\begin{quote}
{\em The idea that statistical problems do not have to be solved as one coherent whole is anathema to Bayesians but is liberating for frequentists.} \citep{wasserman.quote}
\end{quote}
The aim of this manuscript is to present a framework that we believe helps to liberate Bayesians from the need to specify a statistical model, creating an opportunity for direct, posterior-probabilistic inference in statistical learning problems.

Speaking of machine learning, it is often the goal in applications to estimate and make inference on a quantity of interest that is defined as a minimizer of an objective function which, itself, is defined as the expected value of a suitable loss function.  It is at least conceptually straightforward to get an empirical version of the risk function by averaging the loss function over the observed data points.  Then an estimator is readily obtained by minimizing this empirical risk function.  An upside to this approach is that it requires no model specification and hence has no risk of model misspecification bias.  A downside, however, is that this approach is largely focused on point estimation---it is not immediately clear how to quantify uncertainty for the purpose of inference, except perhaps for asymptotically approximate confidence regions.  For a direct, probabilistic quantification of uncertainty about the quantity of interest, without the introduction of a statistical model and the risk of misspecification bias, we recommend the construction of a so-called {\em Gibbs posterior distribution}.  The two primary ingredients that go into the construction of a Gibbs posterior are the empirical risk function---a combination of the data and the loss function that defines the problem---and a prior distribution about the risk minimizer; there is a third ingredient to be discussed below.  These go together in very much the same way that Bayes, back in 1763, originally combined a likelihood and prior, but here it is not based on a joint probability model.  The precise definition is given in Section~\ref{S:gibbs}.  Then probabilistic inference about the quantity of interest based on the Gibbs posterior proceeds exactly as it would based on a Bayesian posterior.  

That the Gibbs posterior distribution assigns probabilities to hypotheses about the quantity of interest does not, on its own, justify its use.  So, what makes inference based on the Gibbs posterior meaningful?  In addition to some basic principles justifying the specific definition in \eqref{eq:gibbs} below, and there are asymptotic results of varying strength and precision (Section~\ref{S:theory}) that suggest the Gibbs posterior will, with a sufficiently informative sample, concentrate its mass around the true risk minimizer.  Intuitively, this latter point implies inferences based on a Gibbs posterior cannot be misleading, e.g., point estimators derived from it cannot be far from whatever feature of the true risk minimizer they are supposed to be estimating.  

Of course, we want more from our framework of inference than ``not being misleading'' and, for this, special care is needed.  Towards this, the third ingredient in the Gibbs posterior distribution construction, left out of the explanation above, is a so-called {\em learning rate}.  Roughly speaking, this learning rate is a tuning parameter that controls the spread of the Gibbs posterior.  This does not affect the asymptotic concentration of mass claim above, but it does affect the Gibbs posterior's limiting form and, therefore, it also affects the reliability of inferences, e.g., the coverage probability of Gibbs posterior credible regions.  So this learning rate cannot be ignored, but must be treated carefully.  This is discussed in detail in Section~\ref{S:learning.rate}.  There we describe a particular algorithm designed to tune the learning rate, in a data-driven way, such that the Gibbs posterior credible regions attain the nominal frequentist coverage probability, hence providing reliable---instead of just ``not misleading''---inferences, even in finite samples. 

In Section~\ref{S:examples}, we present three numerical illustrations in common statistical or machine learning applications: quantile regression, classification, and (nonparametric) regression.  The focus of these examples is the role played by the learning rate and, more specifically, how the seemingly inconvenient need to specify the learning rate can be leveraged to obtain valid Gibbs posterior credible regions.  

Of course, there has been a surge of interest in generalized Bayes in recent years, so there is more to discuss than could be fit into this one manuscript.  In Section~\ref{S:more} we take the opportunity to mention a few of these developments that are outside the scope of the present paper, and also to list a few open problems that we believe, if solved, would make for nice contributions to the expanding literature in this direction.  

Some concluding remarks are made in Section~\ref{S:discuss} but, for us, the key take-away message is as follows.  Real-world problems often involve quantities of interest that have real-world meaning independent of a statistical model.  To avoid either risking model misspecification bias or overly complicating the model formulation, a direct, model-free attack on the quantity of interest is needed.  In statistical learning applications, often the quantity of interest is, or least can be, expressed as a minimizer of a suitable expected loss function.  This loss- rather than likelihood-focused link between the data and quantity of interest creates an opportunity for posterior-probabilistic inference, different from Bayes.  In our view, the results presented here make for a substantial first step towards ``liberating'' the Bayesian paradigm from its reliance on models for the data-generating process.

\section{Gibbs posterior distributions}
\label{S:gibbs}

\subsection{Problem setup}

Suppose we have data $T_1,\ldots,T_n$ assumed to be independent and identically distributed (iid) from some distribution $P$ supported on a set $\TT$.  Note that the individual $T_i$'s can be very general, so, for example, this setup can accommodate the typical supervised learning problem where $T_i = (X_i, Y_i)$ consists of a set of features/examples $X_i \in \RR^r$, possibly high-dimensional, and a response/label $Y_i \in \RR$.  Of course, dependence within $T_i$ is allowed, and captured by $P$, but independence between different $T_i$'s is assumed. 

The key difference between the Gibbs and Bayesian formulation is that, while the latter is likelihood-based, i.e., defined through specification of a statistical model, the former is loss function-based.  That is, define a (real-valued) {\em loss} function $(t,\theta) \mapsto \ell_\theta(t)$ on $\TT \times \Theta$ that measures the compatibility of the value $\theta$ of the quantity of interest with the data point $t$.  Common examples of loss functions include 
\[ \ell_\theta(t) = \{y - \theta(x)\}^2 \quad \text{and} \quad \ell_\theta(x) = 1\{y \neq \theta(x)\}, \quad t=(x,y), \]
where, in the former case, $\theta$ determines a regression function and its compatibility is measured by a squared-error loss and, in the latter case, $\theta$ determines a binary classifier and compatibility is measured by a 0--1 loss; here and throughout, $1(\cdot)$ denotes the indicator function.  Of course, other kinds of loss functions are possible, and we will see several such examples in what follows.  As one would expect, we want the ``loss'' to be small in a certain sense, so our goal is to minimize an expected loss, or {\em risk}, 
\[ R(\theta) = P \ell_\theta, \quad \theta \in \Theta, \]
where we use the operator notation for expected value of the random variable $\ell_\theta(T)$ with respect to $T \sim P$.  So then the quantity of interest is the risk minimizer 
\begin{equation}
\label{eq:rm}
\theta^\star \in \arg\min_{\theta \in \Theta} R(\theta),
\end{equation}
where ``$\in$'' allows for the possibility that the risk minimizer is not unique.  Since we do not know $P$, the risk function is inaccessible and, therefore, so too is the risk minimizer.  Then the goal is to make inference on the unknown $\theta^\star$ based on iid observations $T^n = (T_1,\ldots,T_n)$ from the unspecified distribution $P$.  

Towards this goal, we can proceed by first replacing the inaccessible risk function $R$ with an {\em empirical risk} 
\[ R_n(\theta) = \frac1n \sum_{i=1}^n \ell_\theta(T_i), \quad \theta \in \Theta, \]
and then estimating the risk minimizer, $\theta^\star$, by the {\em empirical risk minimizer}
\begin{equation}
\label{eq:erm}
\hat\theta_n \in \arg\min_{\theta \in \Theta} R_n(\theta). 
\end{equation}
In some contexts, the empirical risk minimizer is referred to as an {\em M-estimator}, and its statistical properties have been extensively studied; see, e.g., \citet{huber1981}, \citet{vaartwellner1996}, \citet{vaart1998}, \citet{kosorok.book}, and \citet{boos.stefanski.2013}. Beyond simply estimating the risk minimizer, our goal is to incorporate prior information, if available, and to quantify uncertainty about $\theta^\star$ with a data-dependent probability distribution on $\Theta$, called a Gibbs posterior, defined next.

The above discussion focused on situations where the quantity of interest has a concrete, real-world interpretation, e.g., $\theta$ is {\em defined} as the value that makes the misclassification error probability as small as possible. However, there are other situations in which a Gibbs posterior-based approach may have advantages.  Suppose, instead, that $\theta$ is defined as, say, a feature of the parameter of a posited statistical model.  It is not so uncommon these days for such models to have many nuisance parameters, intractable likelihood function, or some other complicating aspect.  In such cases, it may not be unreasonable to abandon the statistical model altogether and seek a direct construction of a Gibbs posterior for $\theta$. This would require ``reverse engineering'' a loss function such that $\theta$ can be re-expressed as a risk minimizer.  Examples of this reverse engineering can be found in \citet{syring.martin.image} and \citet{wang.martin.auc, wang.martin.levy}. 

Finally, there is another broad---and familiar---class of problems in which the same risk-minimization terminology and methodology can be applied.  Suppose we posit a statistical model $\model = \{P_\theta: \theta \in \Theta\}$, which could be finite- or infinite-dimensional, and assume that data $T_1,\ldots,T_n$ are iid $P_\theta$.  Then the goal is to estimate the unknown value of the posited model parameter.  However, as the too-often used quote reads, ``All models are wrong...,'' it is necessary to investigate the properties of an estimator of $\theta$ when the posited model happens to be wrong.  From this perspective, we formulate this as a risk-minimization problem where the loss function is 
\begin{equation}
\label{eq:log.loss}
\ell_\theta(t) = -\log p_\theta(t), \quad (t,\theta) \in \TT \times \Theta, 
\end{equation}
where $p_\theta$ is the density or mass function associated with $P_\theta$.  If $P \not\in \model$ denotes the true distribution, then the risk function is $R(\theta) = K(P,P_\theta)$, the Kullback--Leibler divergence of $P_\theta$ from $P$, so the inferential target, $\theta^\star$, is the parameter value that corresponds to the element in $\model$ closest to $P$ in the Kullback--Leibler sense.  Moreover, the empirical risk minimizer is the maximum likelihood estimator and, under certain conditions, consistency and asymptotic normality hold. However, valid inference on $\theta^\star$ will require some adjustments to account for the model misspecfication. 

If this misspecified posterior is equipped with a learning rate $\eta$, i.e., a power $\eta < 1$ on the likelihood function in the Bayes formulation, then the resulting {\em fractional Bayes posterior} \citep[e.g.,][]{bhat.pati.yang.fractional} coincides with a Gibbs posterior based on the log-loss \eqref{eq:log.loss}.  We do not pursue this specific instantiation of Gibbs posteriors any further in this paper, largely because we find the most compelling case for the Gibbs posterior comes from problems where there is no statistical model that directly connects the data and the quantity of interest.

\subsection{Definition}

With the quantity of interest defined via a loss function, instead of a likelihood, the construction of a genuine Bayes posterior distribution is out of reach.  However, we could easily just mimic the Bayes's formula by substituting the empirical risk in place of the negative log-likelihood.  This is precisely the Gibbs posterior, i.e., 
\begin{equation}
\label{eq:gibbs}
\Pi_n^{(\eta)}(d\theta) \propto e^{-\eta n R_n(\theta)} \, \Pi(d\theta), \quad \theta \in \Theta, 
\end{equation}
where $\Pi$ is a prior distribution and $\eta > 0$ is a so-called {\em learning rate} parameter that will be discussed in more detail below. The normalizing constant is determined by integrating the right-hand side of \eqref{eq:gibbs} with respect to $\theta$, so we are implicitly assuming integrability here; this holds, e.g., whenever $\ell_\theta$ is bounded away from $-\infty$.  Clearly, like Bayes's rule, this construction balances the prior and data contributions, with the data component dominating, at least for large $n$, thanks to the summation in the exponent.   

Our interpretation of the Gibbs posterior distribution, $\Pi_n^{(\eta)}$, defined in \eqref{eq:gibbs}, is as a measure that provides uncertainty quantification about $\theta$.  This is not so much different from the (non-subjective) interpretation of an ordinary Bayesian posterior distribution.  While the Gibbs posterior does have an interpretation as a coherent update of prior information in light of observed data---see \citet{bissiri.holmes.walker.2016} and below---we do not find this alone to be compelling justification for the use of a Gibbs posterior.  Our view is that the value of the Gibbs posterior, or any other statistical inference procedures for that matter, is determined by its operating characteristics, its frequentist sampling distribution properties. That is, the Gibbs posterior is not meaningful or useful based on its definition alone.  Instead, the Gibbs posterior is useful only when it can be established that inferences drawn based on it are reliable in a frequentist sense. This distinction is important for various reasons.  One in particular is that it affects the way we approach the learning rate selection problem; see Section~\ref{S:learning.rate} below. 

The loss and empirical risk component of the Gibbs posterior is what distinguishes it from a Bayesian posterior, so naturally that has been (and will be) our focus.  However, our alternative perspective also has unexpected implications on the prior distribution.  In a typical Bayesian setting, the quantities of interest are model parameters and they have no meaning outside that model, no real-world interpretation.  So it should come as no surprise that genuine prior information would typically be lacking for unknowns that have no real-world interpretation.  For example, suppose the posited model is a gamma distribution: where would genuine prior information about the shape parameter come from?  In the situations we have in mind, however, where the quantity of interest has a real-world interpretation, through the risk-minimizer characterization, it is not unreasonable to expect that genuine prior information might be available.  For example, in the illustration in Section~\ref{SSS:mcid} below, the parameter is closely related to the efficacy of a medical treatment, and it is reasonable to expect medical professionals have some prior information about its value for existing treatments. If this real-life information can be encoded in a prior distribution $\Pi$ and incorporated into the Gibbs posterior in \eqref{eq:gibbs}, the the Gibbs formulation has an upper hand in terms of efficiency compared to a more traditional model-based approach that would have the difficult task of re-expressing that available information about $\theta$ in terms of its model parameter. 

Having briefly explained our interpretation of the Gibbs posterior in \eqref{eq:gibbs}, we should also say a few words about another common interpretation.  In the computer science/machine learning literature, the Gibbs posterior is often understood simply as a ``randomized estimator;'' see Section~\ref{SSS:principles} below. That is, since their goal is simply to find parameter values that make the risk function small, $\Pi_n^{(\eta)}$ can be interpreted as an algorithm for generating samples $\tilde\theta$ such that $R(\tilde\theta)$, or $R_n(\tilde\theta)$, tends to be small.  An advantage of the randomized estimator over the empirical risk minimizer is that, from a computational point of view, it might be easier to simulate from the Gibbs posterior than to solve the empirical risk minimization problem.  

%In any case, since their focus is solely on risk minimization, they have no need to consider other aspects of the Gibbs posterior, such as its credible sets. This explains why the theoretical properties of Gibbs posteriors investigated in the machine learning and statistics literature differ; see Section~\ref{S:theory} below. 

\subsection{FAQs}

Next, we have made a list of of some frequently asked questions (FAQs) about the Gibbs posterior properties, interpretation, and construction.

\subsubsection{Why a learning rate? Is it really needed?}

The reader might be surprised by the introduction of the learning rate $\eta$.  Why is this needed?  The quantity of interest, $\theta^\star$, is defined solely as the solution to an optimization problem, and the same $\theta^\star$ emerges as the solution if the loss function $\ell_\theta$ is replaced by $c \, \ell_\theta$ for any $c > 0$.  This scale of the loss function is irrelevant when focus is solely on the (empirical) risk minimization; but since the Gibbs posterior construction requires balancing the influence of the prior with that of the data, the scale of the loss function matters in practical applications.  Indeed, we will see below that, without a careful choice of the learning rate $\eta$, inference based on the Gibbs posterior can be unreliable.  We will discuss the learning rate selection process in detail in Section~\ref{S:learning.rate} below, but for now we need to make one important remark about this.  Note that the data do not directly carry any information relevant to $\eta$, so one {\em cannot} treat this choice in a Bayesian way, with prior-to-posterior updating.  The point is that $\eta$ is a tuning parameter, not a model parameter about which data are directly informative.  As such, we must rely on some other data-driven strategies (e.g., using resampling) to select the learning rate.

\subsubsection{Difference between Gibbs and misspecified Bayes?}

We mentioned above that one common way in which this loss function-based perspective emerges is when a model $\model = \{P_\theta: \theta \in \Theta\}$ is specified but it happens to be misspecified.  Then the Gibbs posterior in \eqref{eq:gibbs}, with $\eta=1$, is precisely the Bayes posterior for this misspecified model. The behavior of the Bayesian posterior under model misspecification has been extensively studied; see, e.g., \cite{berk1966}, \citet{bunke.milhaud.1998}, \citet{kleijn, kleijn.vaart.2012}, \citet{deblasi.walker.2013}, and \citet{rvr.sriram.martin}. Is the Gibbs posterior really any different than this?

First, even though Gibbs and misspecified Bayes share a resemblance, one key difference is the learning rate, $\eta$.  In correctly-specified model cases, the balance between the contributions from prior and data is automatic.  In misspecified model cases, however, this balance is thrown off and needs to be corrected manually through the insertion of a learning rate $\eta \neq 1$.  Ignoring this learning rate adjustment can lead to posterior credible intervals that have arbitrarily bad frequentist coverage probability; see, e.g., Example~2.1 in \citet{kleijn.vaart.2012}.  So if one cares about the reliability of their inferences, the learning rate must be considered and, therefore, a distinction between the Gibbs and misspecified Bayes perspectives is necessary.  

Second, the Gibbs framework is most ideally suited for cases in which no statistical model is assumed, so one cannot even ask if the model is misspecified or not.  Setting aside those examples, e.g., classification, where the loss function is a defining feature of the problem, there are other examples where it can be advantageous to introduce a loss function and construct a Gibbs posterior.  A simple but important example is that of inference on the quantile of a distribution.  The quote from \citet{wasserman.quote} stated in Section~\ref{S:intro} continues:
\begin{quote}
{\em To estimate a quantile, an honest Bayesian needs to put a prior on the space of all distributions and then find the marginal posterior. The frequentist need not care about the rest of the distribution and can focus on much simpler tasks.} 
\end{quote}
That is, the Bayesian can either start by specifying a statistical model, and risk introducing model misspecification bias, or go fully nonparametric and deal with those associated challenges.  Wasserman's ``frequentist'' is not the only one with a direct solution to this problem.  There is a well-known characterization of a quantile as the minimizer of suitable expected loss, which can be used to construct a Gibbs posterior, which can be used to make direct posterior inference on a quantile, with no risk of model misspecification bias and no nuisance parameters to be marginalized over.

\subsubsection{Principles behind the Gibbs posterior construction?}
\label{SSS:principles}

The definition of the Gibbs posterior distribution in \eqref{eq:gibbs} may give the reader the impression that the Gibbs posterior is defined to mimic the Bayesian posterior, with basically the log-likelihood replaced by the empirical risk, but this is not the case: the Gibbs posterior was first constructed in a principled manner and for a specific purpose.  Actually, there are (at least) two such constructions, and here we briefly review both of these derivations.  The first construction is primarily due to \citet{zhang2006a, zhang2006b} and shows the Gibbs posterior is the minimax optimal randomized estimator with respect to the expected, posterior-averaged risk.  The second construction is from \citet{bissiri.holmes.walker.2016} and shows the Gibbs posterior can be interpreted as a coherent updating of beliefs about the parameter, much like the Bayesian posterior, but in the case the information in the data related to $\theta$ is captured via a loss function rather than a likelihood.

Following the setup in Section~\ref{S:gibbs}, a reasonable strategy for learning about $\theta$ is to estimate it by minimizing the empirical risk $R_n(\theta)$ over $\theta\in\Theta$; penalty terms may be added, but that is beyond our scope here.  Alternatively, one may define a {\em randomized estimator} of $\theta$, which is a data-dependent distribution $\widehat\Pi_n$ that depends on data $T^n$ and a prior $\Pi$.  The idea is that, if $\widehat\Pi_n$ is ``good,'' then samples from $\widehat\Pi_n$ ought to be similarly good estimators of $\theta$.  (In some cases, sampling from $\widehat\Pi_n$ is computationally easier than minimizing the empirical risk, so this may be an attractive choice.) What makes for a ``good'' randomized estimator?  One reasonable criterion would be to insist that the $\widehat\Pi_n$-average risk, $\int R(\theta) \, \widehat\Pi_n(d\theta)$, is small in some sense.  This would imply that samples from $\widehat\Pi_n$ tend to be close to $\theta^\star$.  \citet{zhang2006a, zhang2006b} showed that
%showed that the expected, annealed $\widehat\Pi$-posterior mean risk---the left-hand side of the following display---is always bounded by a quantity related to the complexity of the parameter space as expressed by the prior, i.e.,
\begin{equation}
\label{eq:zhang}
P^n \int (-\eta^{-1} \log Pe^{-\eta \ell_\theta} ) \, \widehat\Pi_n(d\theta) \leq P^n \Bigl\{ \int R_n(\theta) \, \widehat\Pi_n(d\theta) + (\eta n)^{-1}K(\widehat\Pi_n, \Pi) \Bigr\}, %E_{\widehat\Pi}\left\{-\frac{1}{\eta} \log E(e^{-\eta \ell_\theta(T)})\right\} \leq E_{T^n}\left\{E_{\theta\sim\widehat\Pi}n^{-1}R_n(\theta) + KL(\widehat\Pi, \Pi)\right\}.
\end{equation}
where the outer expectation, $P^n$, is with respect to the data $T^n$ on which $\widehat\Pi_n$ depends, and $K$ is the Kullback--Leibler divergence.  Furthermore, in certain applications it can be shown that
\[ P^n \int R(\theta) \, \widehat\Pi_n(d\theta) \leq P^n \int (-\eta^{-1} \log P e^{-\eta \ell_\theta}) \, \widehat\Pi_n(d\theta), \quad \text{for all small $\eta$}, \]
%E_{T^n}[E_{\widehat\Pi}R(\theta)] \leq E_{T^n}E_{\widehat\Pi}\left\{-\frac{1}{\eta} \log E(e^{-\eta \ell_\theta(T)})\right\}, \quad \text{for all small $\eta$}, \]
so that Zhang's bound also bounds the expected, posterior-averaged risk.  Zhang suggests the randomized estimator $\widehat\Pi_n$ be chosen so that the bracketed term in the upper bound in \eqref{eq:zhang} is made as small as possible---a kind of minimax optimality.  It is straightforward to show the minimizer of Zhang's bound is precisely the Gibbs posterior as defined in \eqref{eq:gibbs}.    

\citet{bissiri.holmes.walker.2016} derive the Gibbs posterior distribution as a coherent updating of beliefs from a prior distribution $\Pi$ on $\Theta$ to a posterior distribution $\widehat\Pi_n$ in light of data $T$.  In the Bayesian setting, the prior $\Pi$ and the likelihood both carry information about $\theta$, and Bayes' Rule provides the mechanism for combining these two sources of information.  When the likelihood is not available, but a loss $\ell_\theta$ measuring agreement/discrepancy between data and parameter is available, then \citet{bissiri.holmes.walker.2016} argue there must still be an optimal way to combine information in the prior and loss to provide a posterior update to beliefs about $\theta$.  This optimal update $\widehat\Pi_n$ should be the posterior distribution that best matches some combination of loss function and prior, or, in other words, minimizes a discrepancy between the posterior and loss, and the posterior and prior.  They argue that, for coherence of the resulting posterior, this discrepancy must be of the form 
%a linear combination of the posterior-expected loss and the Kullback-Leibler divergence from the posterior to the prior,
\begin{equation}
\label{eq:bissiri}
\Psi \mapsto \int R_n(\theta) \, \Psi(d\theta) + (\eta n)^{-1} K(\Psi, \Pi).
%\int \ell_\theta(T) \, \widehat\Pi_n(d\theta) + K(\widehat\Pi_n, \Pi); 
%E_{\theta\sim \widehat\Pi} \ell_\theta(T) + KL(\widehat\Pi, \Pi),
\end{equation}
Minimizing this discrepancy is equivalent to minimizing Zhang's upper bound, and implies the coherent updating rule they seek is the same Gibbs posterior distribution in \eqref{eq:gibbs}.

Neither derivation above sheds light on the choice of learning rate $\eta$.  Any $\eta>0$ leads to a coherent updating of beliefs, and, likewise, the Gibbs posterior for any learning rate $\eta>0$ minimizes Zhang's upper bound of the expected annealed risk, since the latter is a function of $\eta$.  It must be that other considerations are needed to guide the choice of learning rate, and we discuss these in Section~\ref{S:learning.rate}.

\subsection{Illustrations}

\subsubsection{Quantiles}
\label{SSS:quantile}

Wasserman, in the quote above, brought up the example of inference on a quantile, so we take this as our first illustration.  Specifically, suppose we have data $T^n = (T_1,\ldots,T_n)$ iid from distribution $P$, and the quantity of interest is $\theta=\theta(P)$, the $\tau^\text{th}$ quantile of $P$.  A proper Bayesian solution requires that we put a prior distribution on $P$---either through a parametric model, $P_\zeta$, and a prior on $\zeta$, or through a nonparametric model---and then get the corresponding marginal posterior for $\theta$.  To simultaneously avoid the risk of model misspecification bias from introducing a parametric model and the computational challenges of working with a prior for the infinite-dimensional $P$, we proceed to construct a Gibbs posterior for $\theta$.  This requires a characterization of $\theta$ as the minimizer of a suitable risk (expected loss) function.  But recall that the $\tau^\text{th}$ quantile of a distribution can be expressed as the minimizer of $R(\theta) = P \ell_\theta$, where 
\begin{equation}
\label{eq:check}
\ell_\theta(t) = (t-\theta) \, (\tau - 1_{t \leq \theta}).
\end{equation}
This immediately leads to a Gibbs posterior distribution, with a density (with respect to Lebesgue measure) given by 
\begin{equation}
\label{eq:gibbs.density}
\pi_n^{(\eta)}(\theta) \propto e^{-\eta \, nR_n(\theta)} \, \pi(\theta), \quad \theta \in \Theta, 
\end{equation}
where $R_n$ is the empirical risk corresponding to the loss function in \eqref{eq:check}, and $\pi$ is a prior density for $\theta$ supported on $\Theta \subseteq \RR$.  

Since this is a scalar parameter problem and the loss function is relatively simple, it is easy to visualize the Gibbs posterior density.  Here we focus on visualizing the role played by the learning rate on the spread of the Gibbs posterior density.  Set $\tau=0.7$, so that interest is in inference on the $70^\text{th}$ percentile.  We simulate data from a gamma distribution with shape parameter 5 and scale parameter 1, so the true quantile is $\theta^\star \approx 5.89$.  Plots of the Gibbs posterior density for $\theta$ are shown in Figure~\ref{fig:quantile.illustration} for two sample sizes $n$ and three different learning rate values, $\eta \in \{0.1, 0.5, 1.0\}$.  The first point that deserves mention is that the Gibbs posterior densities are roughly centered around $\theta^\star$ and, as expected, they become more concentrated as sample size increases.  Second, as the learning rate varies, the center of the Gibbs posterior is mostly unchanged, but smaller $\eta$ clearly increases the posterior spread.  This highlights the impact of the learning rate on the quality of our Gibbs posterior inference. 

\begin{figure}[t]
\begin{center}
\subfigure[$n=25$]{\scalebox{0.6}{\includegraphics{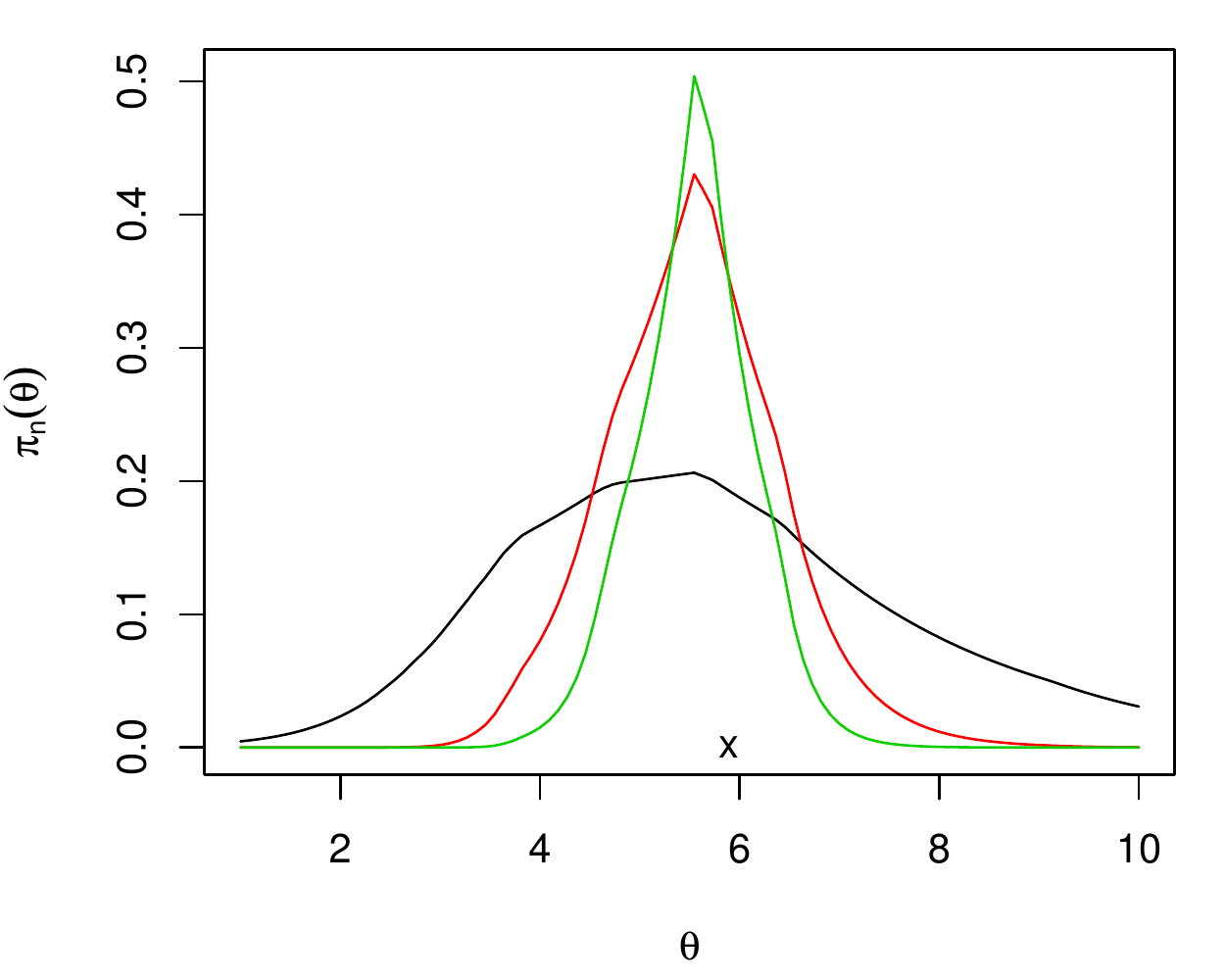}}}
\subfigure[$n=50$]{\scalebox{0.6}{\includegraphics{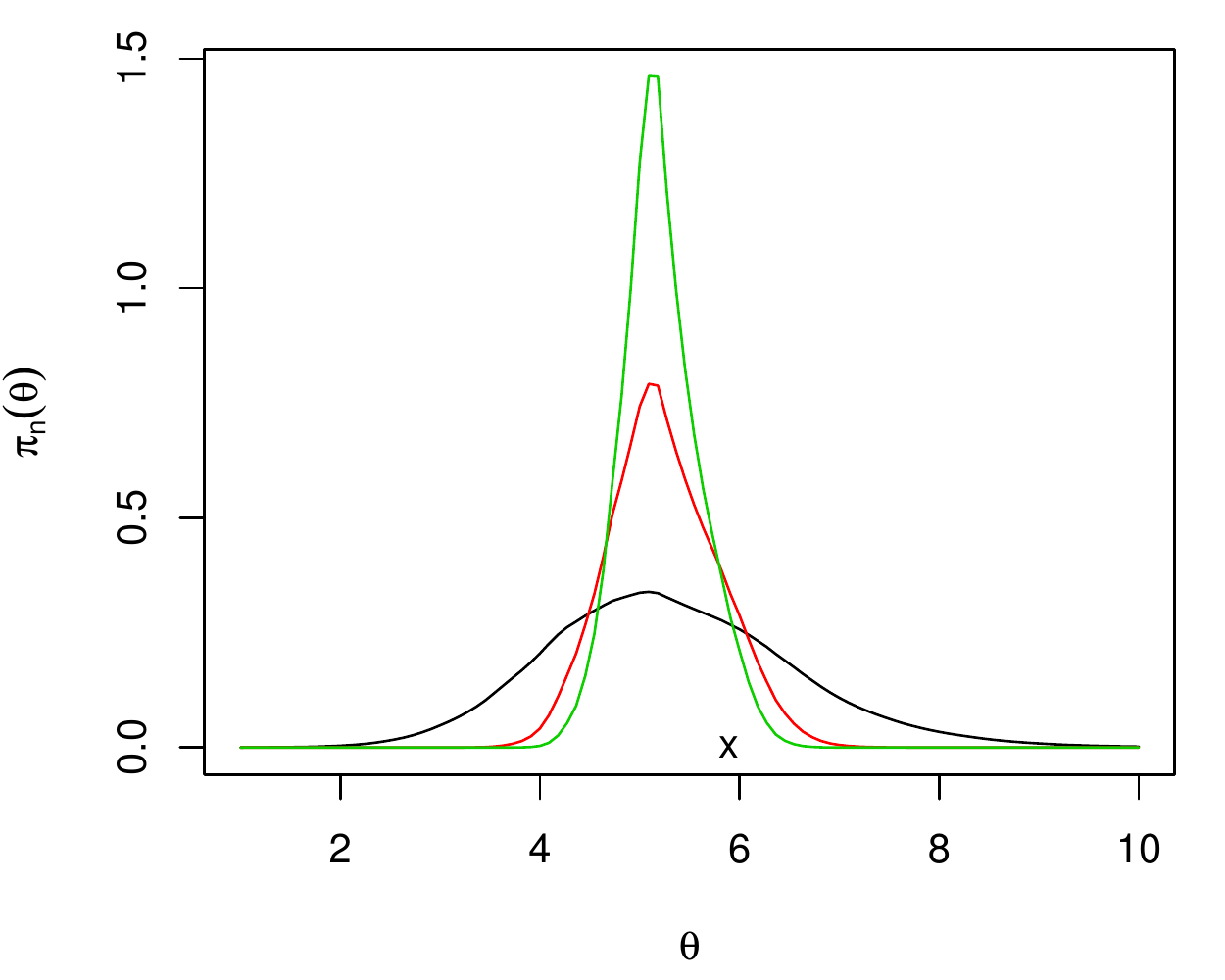}}}
\end{center}
\caption{Plots of the Gibbs posterior density for the $\tau=0.7$ quantile for two different sample sizes and three different learning rates: $\eta=0.1$ (black), $\eta=0.5$ (red), and $\eta=1.0$ (green). True $\theta^\star \approx 5.89$ is marked by ``x'' on the $\theta$-axis.}
\label{fig:quantile.illustration}
\end{figure}

\subsubsection{Minimum clinically important difference}
\label{SSS:mcid}

Next, we revisit the example that was the genesis of our investigations into Gibbs posteriors.  In a medical context, imagine that patients are given a treatment and the goal is to determine if the patients' circumstances have significantly improved from pre- to post-treatment.  A standard---and purely statistical---approach would be to determine a cutoff such that, if the patients' observed change in, say, blood pressure, exceeds that cutoff, then it is determined that this cannot be attributed to chance alone and, therefore, the treatment is judged to be statistically significant.  But it is well-known that {\em statistical significance does not imply clinical significance}, so it may be necessary to consider the latter directly.  One way to do so is to redefine the aforementioned cutoff so that it incorporates the patients' experience, and that cutoff is commonly referred to as the {\em minimum clinically important difference}, or {\em MCID}.  \citet{xu.mcid} formalized this as follows.  In addition to a real-valued, patient-specific diagnostic measure $X$ (e.g., pre- to post-treatment change in blood pressure), suppose we also have a binary, patient-reported assessment $Y$, where $Y \in \{-1,+1\}$, with $Y=+1$ if the patient felt the treatment was effective and $Y=-1$ otherwise. Then the data consists of pairs $T=(X,Y)$, having distribution $P$, and the MCID threshold is defined as 
\[ \theta^\star = \arg\min_\theta P\{Y \neq \text{sign}(X-\theta)\}. \]
That is, the MCID $\theta^\star$ is the cutoff $\theta$ on the diagnostic measure scale that makes $\text{sign}(X-\theta)$ as best a predictor of $Y$ as possible.  Equivalently, this can be expressed as the minimizer of a risk function $R(\theta) = P \ell_\theta$, as in \eqref{eq:erm}, with the corresponding loss function 
\[ \ell_\theta(t) = \tfrac12\{1 - y \, \text{sign}(x-\theta)\}, \quad t=(x,y). \]
Given iid data $T_i = (X_i,Y_i)$, for $i=1,\ldots,n$, from $P$ as described above, the MCID can be estimated by minimizing the empirical risk $R_n(\theta) = n^{-1} \sum_{i=1}^n \ell_\theta(T_i)$.  Alternatively, in \citet{syring.martin.mcid}, we constructed a Gibbs posterior distribution as in \eqref{eq:gibbs}.  

An important point we want the reader to take away from this example is that, unlike the previous example, where the empirical risk could be interpreted as a ``log-likelihood'' with respect to an asymmetric Laplace model, the Gibbs posterior solution in the MCID cannot be understood as a Bayesian solution. That is, there is no plausible model for the data having negative log-likelihood equal to $R_n(\theta)$ above.  A data analyst wishing to take a Bayesian approach to make inference on the MCID problem will likely treat $\theta$ as a feature of a statistical model.  For example, he might take a relatively simple and familiar approach through a binary regression model with, say, the logit link.  The Bayesian solution requires prior specifications for (and posterior computations of) the intercept and slope parameters, $(\alpha,\beta)$, and then marginal inference on the MCID defined as $\theta=-\alpha/\beta$.  Alternatively, to guard against potential biases due to model misspecification, the data analyst might settle on a nonparametric formulation in which the link function, say, $g(x)$, in the binary regression is itself the model parameter and a more complicated version of the prior specification and posterior computation for $g$, followed marginalization $g \to \theta$ must be carried out.  We attempted both of these strategies and found that, for inference on the MCID $\theta$, the former would often be biased while the latter would often sacrifice efficiency; see, e.g., \citet{syring.martin.mcid}, Figure~1. 

Moreover, this distinction between the Gibbs and Bayesian solutions is important in terms of how the learning rate is treated. Figure~\ref{fig:mcid.illustration}(a) shows plots of the Gibbs posterior density for the MCID for three different learning rate values.  Note the significant effect the learning rate has on how concentrated the Gibbs posterior is around the empirical risk minimizer.  Clearly the learning rate is crucial to the method's validity that the learning rate choice be handled with care.  

To help drive the latter point home, we present the results of a brief simulation study in Figure~\ref{fig:mcid.illustration}(b).  There we plot the (Monte Carlo estimate of the) coverage probability of 95\% Gibbs posterior credible intervals for $\theta$ as a function of the learning rate $\eta$.  As expected, we find that relative small (resp.~large) learning rates lead to credible intervals that over (resp.~under) cover.  But the meaning of ``relatively small/large'' changes with the sample size $n$, i.e., the learning rate value needed to hit the nominal coverage probability exactly depends on $n$---larger $n$ requires smaller $\eta$.  Since the distribution $P$ is unknown, the Monte Carlo computations done in Figure~\ref{fig:mcid.illustration}(b) cannot be carried out in practice, so a data-driven learning rate selection procedure is requires; see Section~\ref{S:learning.rate}.  

\begin{figure}[t]
\begin{center}
\subfigure[Posterior densities]{\scalebox{0.6}{\includegraphics{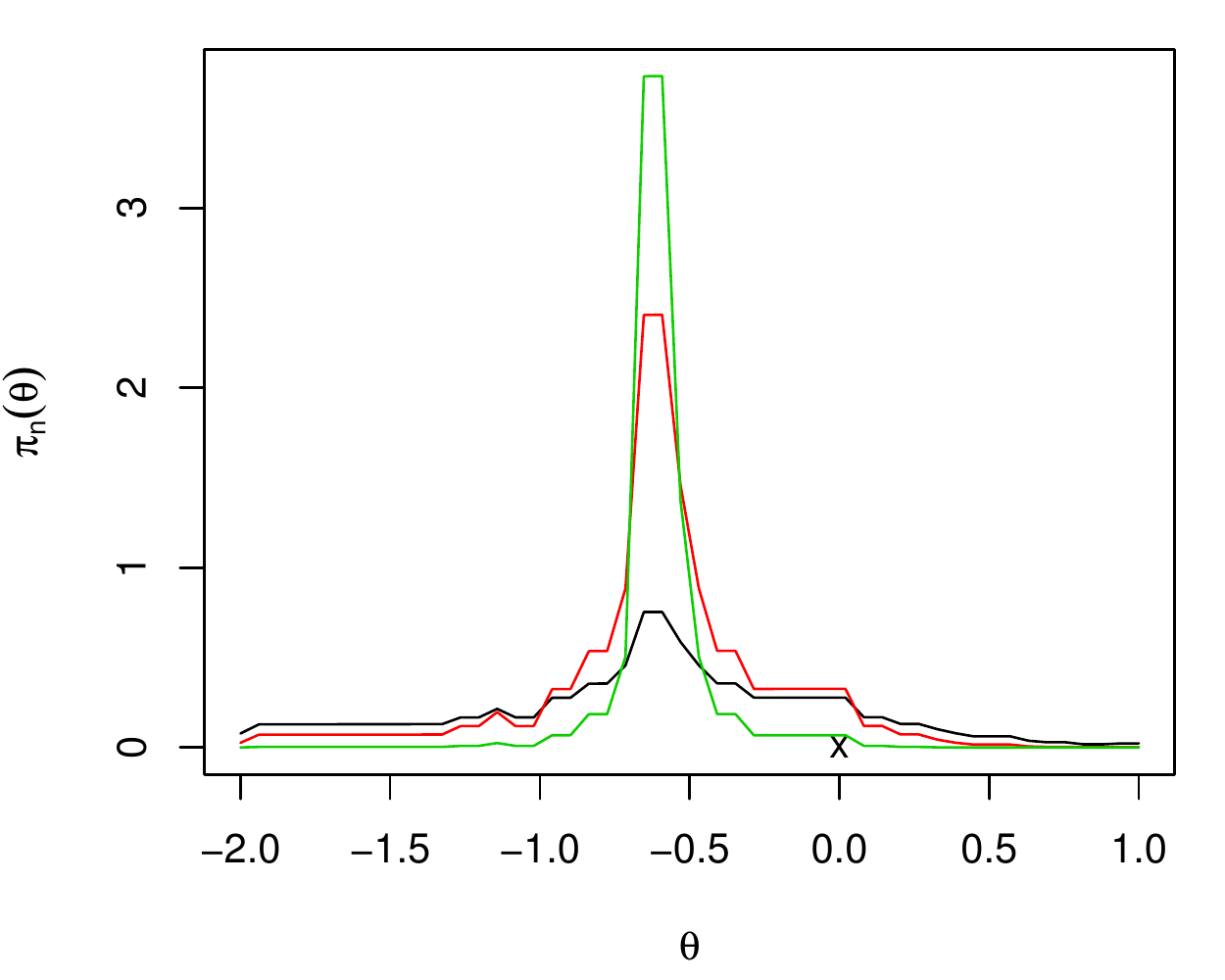}}}
\subfigure[Coverage versus $\eta$]{\scalebox{0.6}{\includegraphics{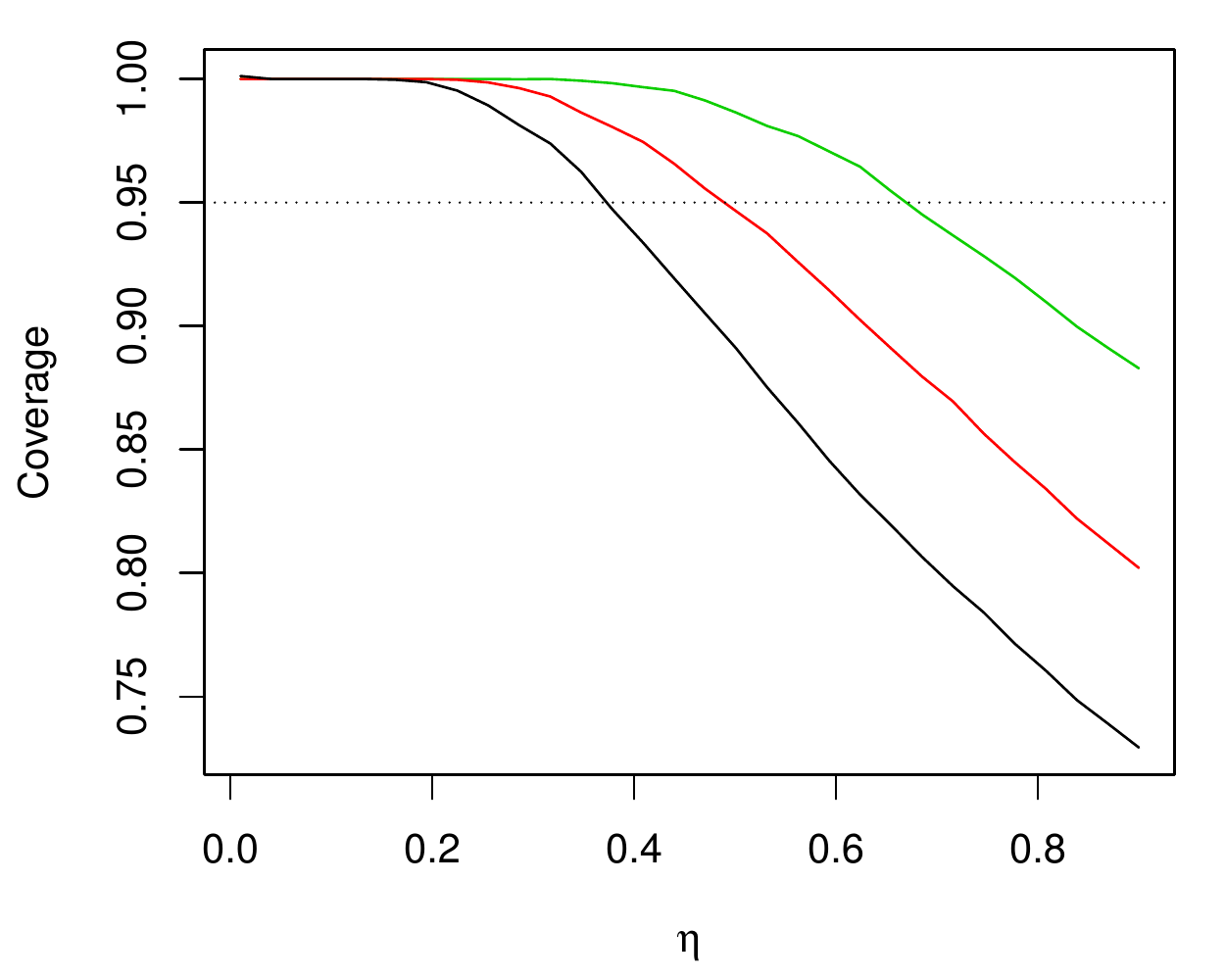}}}
\end{center}
\caption{Plots associated with the MCID example in Section~\ref{SSS:mcid}.  Panel~(a) shows the Gibbs posterior density for a simulated data set with three learning rates: $\eta=0.25$ (black), $\eta=0.5$ (red), and $\eta=1.0$ (green); true $\theta^\star \approx 0$ is marked by ``x.'' Panel~(b) shows the coverage probability of 95\% Gibbs posterior credible regions as a function of the learning rate $\eta$, for $n=50$ (green), $n=100$ (red), and $n=200$ (black).}
\label{fig:mcid.illustration}
\end{figure}

\section{Asymptotic theory}
\label{S:theory}

\subsection{Objectives and general strategies}
\label{SS:theory.basics}

If our intention is to use the Gibbs posterior, $\Pi_n^{(\eta)}$, in \eqref{eq:gibbs}, for making inference on the risk minimizer, then a first basic requirement is that it should concentrate its mass, at least as $n \to \infty$, around the true value $\theta^\star$.  More specifically, if $d$ is a suitable distance (or divergence of some kind) on $\Theta$, then we would expect the Gibbs posterior to satisfy 
\begin{equation}
\label{eq:convergence}
\Pi_n^{(\eta)}(\{\theta: d(\theta,\theta^\star) > \eps\}) \to 0 \quad \text{as $n \to \infty$ for any $\eps$}. 
\end{equation}
The reader should keep in mind that $\Pi_n^{(\eta)}$ depends on data, so it is a random measure and the convergence above is in a stochastic sense.  This will be made more precise below.  

To start, we want to provide some intuition as to why we would expect the above concentration property to hold.  Consider the case where $\Pi_n^{(\eta)}$ has a density with respect to Lebesgue measure, which we denote by $\pi_n^{(\eta)}$, as given in \eqref{eq:gibbs.density}.  If we ignore the influence of the prior distribution, which is not unreasonable since $n R_n(\cdot)$ becomes more and more influential as $n \to \infty$, then we find that the Gibbs posterior density will be maximized at $\theta=\hat\theta_n$, the empirical risk minimizer.  And thanks to normalization, it will tend to concentrate its mass around the point at which the density is maximized.  Since $R_n(\cdot) \approx R(\cdot)$ pointwise and often uniformly, we expect $\hat\theta_n \approx \theta^\star$.  Therefore, $\Pi_n^{(\eta)}$ is expected to concentrate its mass around $\theta^\star$.  The technical details that follow explain how these expectations become reality. 

It was mentioned above that our perspective on the Gibbs posterior, which we consider to be a ``statistical perspective,'' differs from that commonly taken in the computer science/machine learning literature.  Indeed, if the Gibbs posterior is viewed simply a randomized estimator of the risk minimizer, then the only thing that matters is the posterior probability assigned to risk difference neighborhoods, i.e., $\{\theta: R(\theta) - R(\theta^\star) \leq \delta\}$.  From our perspective, the Gibbs posterior can be used for general uncertainty quantification so there are other geometrically more natural metrics to consider and even more refined distributional properties, e.g., asymptotic normality, that would be both interesting and relevant.  Below we present the results from our statistical perspective, and remark on the implications for those who focus solely on risk minimization.  

%Proofs of the results presented below are given in the full-length version to be made available on arXiv.

%There are at least two different approaches one can take to establish the relevant asymptotic results.  For a cohesive presentation, we will stick to just one approach in what follows, namely, that based on {\em empirical process} theory.  Some brief remarks about different approaches, such as those taken in, e.g., {\color{blue} Grunwald and Mehta (2020)} and {\color{blue} Syring and Martin (2022)}, will be given in {\color{red} Section~??}.

\subsection{Consistency}
\label{ss:cons}

Consistency for a Gibbs posterior distribution means something very similar to consistency for a point estimator---both imply a certain random variable converges in probability to the ``right" value.  This is made precise in

\begin{defn}
\label{defn:consistency}
For a given divergence $d:\Theta\times\Theta \to \mathbb{R}^+$, the Gibbs posterior distribution $\Pi_n^{(\eta)}$ is {\em consistent} at $\theta^\star$ if 
\begin{equation}
\label{eq:consistent}
\Pi_n^{(\eta)}(\{\theta: d(\theta, \theta^\star) > \eps\}) \to 0, \quad \text{in $P$-probability, as $n \to \infty$}. 
\end{equation}
%for any $\delta, \,\eps>0$
%\[\lim_{n\rightarrow \infty}P(\{T^n :\Pi_n^{(\eta)}(\{\theta:d(\theta,\theta^\star)>\eps\})>\delta\})= 0.\]
\end{defn}

Of course, since the Gibbs posterior probability is bounded, the ``in $P$-probability'' convergence in \eqref{eq:convergence} implies convergence in $P$-expectation or ``in $L_1(P)$.''  Strategies that aim to bound the $P$-expectation of the event in \eqref{eq:consistent}, rather than the $P$-probability of that event, may produce additional benefits, namely, finite-sample bounds.  We do not consider such strategies here, but see \citet{gibbs.general}.
%Here we have defined consistency by taking the mode of convergence in \eqref{eq:convergence} to be ``in $P$-probability.''  It is also common to define consistency in terms of convergence in $P$-expectation, or ``in $L_1(P)$,'' which has its own benefits.  Indeed, it may be possible to get a finite-sample upper bound on the expected value of the left-hand side in \eqref{eq:convergence}, but we will not consider this here.  
This notation of consistency also depends on the choice of divergence $d$. Common choices include a natural/generic metric on $\Theta$, such as Euclidean distance, or a problem-specific divergence such as $d(\theta,\theta^\star) = \{ R(\theta) - R(\theta^\star)\}^{1/2}$.  Because of this dependence, we should technically write that \eqref{eq:consistent} implies $\Pi_n^{(\eta)}$ is {\em $d$-consistent}, but here $d$ will be taken as a given feature of the problem and left implicit in the notation. 

% \begin{remark}
% We have defined consistency using convergence in $P-$probability, but some authors may alternatively define consistency using convergence in $P-$expectation.  Specifically, one could require the stronger condition
% \[\lim_{n\rightarrow\infty}P\Pi_n^{(\eta)}(\{\theta:d(\theta,\theta^\star)>\eps\})=0.\]
% The obvious downside to this stronger notion of consistency is that it may be more challenging to prove, depending on the application.  On the other hand, it opens up the possibility of providing finite-sample bounds.  Suppose, for example, one could show further that, for all $n>n_0$,
% \[P\Pi_n^{(\eta)}(\{\theta:d(\theta,\theta^\star)>\eps\}) < g(\eps, n)\]
% for some function $g(\eps, n)$ that decreases in $n$.  Then, Markov's inequality implies
% \[P(\{T^n :\Pi_n^{(\eta)}(\{\theta:d(\theta,\theta^\star)>\eps\})>\delta\}) < \frac{g(\eps, n)}{\eps}.\]
% From the point of view of a data analyst who uses the Gibbs posterior primarily as a random estimator the preceding finite-sample bound is very valuable.  If one takes the divergence to be the risk difference, then the finite-sample bound provides an explicit bound on how far the randomized estimator can stray from $\theta^\star$, with high probability. 
% \end{remark}

In the classical consistency results for M-estimators, or empirical risk minimizers, as presented in \citet[][Sec.~5.2]{vaart1998}, there are two key sufficient conditions.  In words, first a uniform law of large numbers is needed to ensure that the risk function to be minimized can be estimated accurately, uniformly over $\Theta$; second, a separation or identifiability condition is needed to ensure that the risk minimizer can be identified.  These two conditions, in mathematical detail, are presented in \eqref{eq:num_cons_a} and \eqref{eq:num_cons_b}, respectively:
\begin{subequations}
\label{eq:num_cons}
\begin{align}
\sup_{\theta \in \Theta} |R_n(\theta) - R(\theta)| & \to 0 \quad \text{in $P$-probability} \label{eq:num_cons_a} \\
\inf_{\theta: d(\theta,\theta^\star) > \delta} \{R(\theta) - R(\theta^\star)\} & > 0 \quad \text{for any $\delta > 0$}. \label{eq:num_cons_b}
    % &\lim_{n\rightarrow \infty}P(\sup_{\Theta}|R_n(\theta) - R(\theta)|>\alpha_1)=0, \quad \text{for any }\alpha_1>0, \text{and}\\
    % &\inf_{\{\theta:d(\theta,\theta^\star)>\alpha_2\}} R(\theta) > R(\theta^\star)\quad \text{for any }\alpha_2>0,
\end{align}
\end{subequations}
Interestingly, the above sufficient conditions for M-estimator consistency turn out to be almost enough to establish Gibbs posterior consistency.  All that remains is to ensure that the prior assigns a sufficient amount of mass in the limit $\theta^\star$, i.e., 
\begin{equation}
\label{eq:prior_cond}
\Pi(\{\theta: R(\theta) - R(\theta^\star) < \delta\}) >0 \quad \text{for all $\delta>0$}.
\end{equation}
This is a very mild condition and, for example, would be satisfied in finite-dimensional settings where the prior has a strictly positive density in a neighborhood of $\theta^\star$.  

\begin{thm}
\label{thm:cons}
If \eqref{eq:num_cons} and \eqref{eq:prior_cond} hold, then the Gibbs posterior as defined in \eqref{eq:gibbs}, is consistent in the sense of Definition~\ref{defn:consistency}. 
\end{thm}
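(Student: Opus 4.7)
The plan is to follow the standard ``ratio'' argument used for Bayesian posterior consistency, adapted so that the log-likelihood is replaced by $-\eta n R_n$. Write the event of interest as $A_\eps = \{\theta : d(\theta,\theta^\star) > \eps\}$ and, after centering by the constant $e^{-\eta n R_n(\theta^\star)}$ in numerator and denominator, express
\[
\Pi_n^{(\eta)}(A_\eps) = \frac{\int_{A_\eps} e^{-\eta n [R_n(\theta) - R_n(\theta^\star)]}\,\Pi(d\theta)}{\int_{\Theta} e^{-\eta n [R_n(\theta) - R_n(\theta^\star)]}\,\Pi(d\theta)} =: \frac{N_n}{D_n}.
\]
The strategy is to show $N_n$ is exponentially small and $D_n$ is not too small, both on an event whose $P$-probability tends to~$1$, with the common ``linking'' random variable being $\Delta_n := \sup_{\theta \in \Theta} |R_n(\theta) - R(\theta)|$, which vanishes in $P$-probability by \eqref{eq:num_cons_a}.

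First I would upper-bound the numerator. The separation condition \eqref{eq:num_cons_b} gives a constant $\Delta = \Delta(\eps) > 0$ such that $R(\theta) - R(\theta^\star) \geq \Delta$ for every $\theta \in A_\eps$. Splitting $R_n(\theta) - R_n(\theta^\star) = [R(\theta) - R(\theta^\star)] + [R_n(\theta) - R(\theta)] - [R_n(\theta^\star) - R(\theta^\star)]$ and applying the triangle inequality yields $R_n(\theta) - R_n(\theta^\star) \geq \Delta - 2\Delta_n$ uniformly on $A_\eps$, hence $N_n \leq e^{-\eta n (\Delta - 2\Delta_n)}$.

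Next I would lower-bound the denominator using the prior-mass condition~\eqref{eq:prior_cond}. Fix $\delta \in (0,\Delta)$ to be chosen and let $B_\delta = \{\theta : R(\theta) - R(\theta^\star) < \delta\}$; by \eqref{eq:prior_cond}, $\Pi(B_\delta) > 0$. Restricting the integral in $D_n$ to $B_\delta$ and bounding by the same device, $R_n(\theta) - R_n(\theta^\star) \leq \delta + 2\Delta_n$ on $B_\delta$, so $D_n \geq e^{-\eta n (\delta + 2\Delta_n)}\,\Pi(B_\delta)$. Combining the two bounds,
\[
\Pi_n^{(\eta)}(A_\eps) \leq \Pi(B_\delta)^{-1}\, e^{-\eta n [\Delta - \delta - 4\Delta_n]}.
\]
Choosing $\delta = \Delta/2$ and considering the event $E_n = \{\Delta_n < \Delta/16\}$, the exponent becomes at most $-\eta n \Delta/4$, which forces the right-hand side to $0$ as $n \to \infty$; by \eqref{eq:num_cons_a}, $P(E_n) \to 1$, and on the complement the Gibbs posterior probability is in any case bounded by $1$. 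Therefore $\Pi_n^{(\eta)}(A_\eps) \to 0$ in $P$-probability, which is \eqref{eq:consistent}.

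The only step requiring a bit of care is the bookkeeping that links the divergence-based neighborhood $A_\eps$ to a risk-based separation constant $\Delta$ and matches it against the risk-ball $B_\delta$ on which the prior-mass condition is stated; once $\delta$ is chosen strictly smaller than $\Delta$, the exponent is genuinely negative for large~$n$ and the argument is essentially mechanical. Everything else (uniform convergence, prior positivity) is supplied directly by the hypotheses, so no delicate concentration or entropy bounds are needed; in particular, this proof yields the exponential rate $e^{-\eta n \Delta/4}$ as a byproduct, though only in-probability consistency is claimed in the statement.
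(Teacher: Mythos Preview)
Your proof is correct and follows essentially the same ratio argument as the paper: write $\Pi_n^{(\eta)}(A_\eps)=N_n/D_n$, bound the numerator exponentially via the separation and uniform-convergence conditions, bound the denominator below via the prior-mass condition, and combine. The only minor difference is that the paper's denominator lemma invokes the pointwise law of large numbers together with Fatou's lemma rather than the uniform convergence \eqref{eq:num_cons_a} you use; since \eqref{eq:num_cons_a} is already a hypothesis, your route is equally valid and arguably more direct.
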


\begin{proof}
See Appendix~\ref{proof:thm1}.
\end{proof}

Since the conditions in \eqref{eq:num_cons} are identical to a common set of sufficient conditions for consistency of the M-estimator, they can be checked in all sorts of practically relevant examples.  For instance, in the quantile illustration presented in Section~\ref{SSS:quantile}, condition \eqref{eq:num_cons_b} holds if $P$ admits a unique $\tau^\text{th}$ quantile, and condition \eqref{eq:num_cons_a} holds, at least for compactly supported $P$, by the Glivenko--Cantelli theorem, a consequence of the fact that $\theta \mapsto \ell_\theta$ is Lipschitz \citep[e.g.,][Example~19.7]{vaart1998}.

\subsection{Concentration rates}
\label{ss:conc}

A more refined characterization of the asymptotic behavior of the Gibbs posterior distribution can be described by its concentration rate.  Roughly, the Gibbs posterior has concentration rate $\eps_n$ if radius-$\eps_n$ neighborhoods of $\theta^\star$ have vanishing $\Pi_n^{(\eta)}$-probability.  

\begin{defn}
\label{defn:rate}
For a vanishing sequence $\eps_n > 0$, the Gibbs posterior distribution $\Pi_n^{(\eta)}$ has concentration rate $\eps_n$ at $\theta^\star$ if 
\begin{equation}
\label{eq:rate}
\Pi_n^{(\eta)}(\{\theta: d(\theta, \theta^\star) > M_n \eps_n \}) \to 0, \quad \text{in $P$-probability, as $n \to \infty$}, 
\end{equation}
where $M_n$ is either a sufficiently large constant or a sequence diverging arbitrarily slowly.  
\end{defn} 

Comparing Definitions~\ref{defn:rate} and \ref{defn:consistency}, the former has a shrinking radius while the latter has a fixed radius, so the former result is stronger. In the typical case where $\eps_n = n^{-1/2}$, \eqref{eq:rate} can be compared to, e.g., a central limit theorem-type result where the spread of the (in this case, Gibbs posterior) distribution is shrinking at rate $n^{-1/2}$.  

Since the rate result is stronger than consistency, we can expect the sufficient conditions here to be stronger and more difficult to verify than those for consistency in Section~\ref{ss:cons}.  Fortunately, like above, connections to the M-estimator asymptotic theory are available to guide us.  Following \citet[][Sec.~5.8]{vaart1998}, we require $(P,\ell_\theta)$ to be such that, for some $(\alpha,\beta)$ with $\alpha > \beta > 0$, and for all small $\delta > 0$, 
\begin{subequations}
\label{eq:conc_num}
\begin{align}
P \sup_{\theta: d(\theta,\theta^\star)<\delta} |\mathbb{G}_n(\ell_\theta - \ell_{\theta^\star})| & \lesssim \delta^\beta \label{eq:conc_num2} \\
\inf_{\theta: d(\theta,\theta^\star)>\delta}\{R(\theta) - R(\theta^\star)\} & \gtrsim \delta^\alpha, \label{eq:conc_num1} 
\end{align}
\end{subequations}
where ``$\lesssim$'' and ``$\gtrsim$'' denote inequality up to a constant multiple, and $\mathbb{G}_n$ denotes the empirical process, i.e., $\mathbb{G}_n f := n^{1/2}(\mathbb{P}_nf - Pf)$; note that, in case the random variable in \eqref{eq:conc_num2} is not measurable, the expectation can be replaced by an upper expectation.  Intuitively, at least from the M-estimation perspective, what matters is that empirical risk $R_n$ has roughly the same behavior as the risk $R$; in that case, if $R$ has a discernible minimizer, then $R_n$ will too.  While the empirical process notation complicates matters, condition \eqref{eq:conc_num2} amounts to having some uniform control on the fluctuations of $R_n$ around $R$.  Moreover, \eqref{eq:conc_num1} is a refined version of the condition \eqref{eq:num_cons_b} that provides some quantification of how discernible the minimizer of $R$ is.  If these conditions are satisfied, then the Gibbs posterior distribution will concentrate at $\theta^\star$ at a rate determined by the pair $(\alpha,\beta)$.  In particular, smaller $\alpha$ means greater discernibility, and smaller $\beta$ means tighter control on the fluctuations, which should make the concentration rate faster.  

As above, the conditions \eqref{eq:conc_num} are almost enough to establish the Gibbs posterior concentration rate.  All that remains is to ensure the the prior assigns sufficient mass to neighborhoods of $\theta$ of the appropriate radius.  This is a bit more complicated than the analogous condition \eqref{eq:prior_cond} for consistency.  To state this condition precisely, we will need the two functions 
\[ m(\theta,\theta^\star) = R(\theta) - R(\theta^\star) \quad \text{and} \quad v(\theta,\theta^\star) = P(\ell_\theta - \ell_{\theta^\star})^2 - m^2(\theta, \theta^\star), \]
and the corresponding neighborhood 
\[ \Theta(r) = \{\theta: m(\theta, \theta^\star) \vee v(\theta, \theta^\star) \leq r\}, \quad r > 0. \]

\ifthenelse{1=1}{}{
\begin{thm}
\label{thm:conc}
Suppose that $(P,\ell_\theta)$ are such that \eqref{eq:conc_num} hold with constants $(\alpha,\beta)$ satisfying either
\begin{align*}
    &\alpha \leq 1\quad\text{and}\quad \alpha - \beta > 1/2; \text{ or}\\
    &\alpha > 1\quad\text{and}\quad \alpha - 2\beta > 0.
\end{align*}
Set $\eps_n = n^{-1/(2\alpha-2\beta)}$.  If, for constants $C,C' > 0$, the prior $\Pi$ satisfies 
\begin{equation}
\label{eq:conc_den}
\Pi\{\Theta(C\eps_n)\} \gtrsim \exp(-C' n \eps_n^\alpha),
%\Pi(\Theta_n) \gtrsim e^{-C n s_n^\alpha}
\end{equation}
then \eqref{eq:rate} holds and the Gibbs posterior $\Pi_n^{(\eta)}$ has concentration rate $\eps_n$ at $\theta^\star$.  
%Let $s_n$ and $\alpha$ be as defined in Lemma~\ref{lem:conc_num}.  If $n s_n^\alpha\rightarrow \infty$; if 
%for $\Theta_n$ as defined in Lemma~\ref{lem:conc_den} for some $C>0$ and where $t_n := s_n^\alpha$; and, if \eqref{eq:conc_num1} and \eqref{eq:conc_num2} hold, then the Gibbs posterior distribution concentrates at rate $s_n$.
\end{thm}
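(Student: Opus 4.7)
The plan is to adapt the standard ratio-decomposition and peeling strategy from the (generalized) Bayesian posterior concentration literature---e.g., Ghosal--Ghosh--van der Vaart and Kleijn--van der Vaart---to the Gibbs setting. Writing $A_n = \{\theta: d(\theta,\theta^\star) > M_n\eps_n\}$ and factoring out the common normalizer $e^{\eta n R_n(\theta^\star)}$, I decompose
\[
\Pi_n^{(\eta)}(A_n) = \frac{\int_{A_n} e^{-\eta n[R_n(\theta) - R_n(\theta^\star)]} \, \Pi(d\theta)}{\int_\Theta e^{-\eta n[R_n(\theta) - R_n(\theta^\star)]} \, \Pi(d\theta)} =: \frac{N_n}{D_n},
\]
and aim to show $N_n/D_n \to 0$ in $P$-probability by separately producing, on events of $P^n$-probability tending to $1$, a lower bound $D_n \geq e^{-K_1 n\eps_n^\alpha}$ and an upper bound $N_n \leq e^{-K_2 n\eps_n^\alpha}$ with $K_2 > K_1$, the gap being supplied by $M_n$.

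For the denominator I would restrict to $B = \Theta(C\eps_n)$ and invoke the prior mass condition \eqref{eq:conc_den}. Jensen's inequality in its evidence-lower-bound form gives
\[
\log D_n \geq \log \Pi(B) - \eta n \int_B [R_n(\theta) - R_n(\theta^\star)] \, \Pi(d\theta \mid B).
\]
Under $P^n$ the inner integral has mean $\int_B m(\theta,\theta^\star) \, \Pi(d\theta\mid B) \leq C\eps_n$, and variance at most $C\eps_n / n$ by the definition of $v$ together with Fubini; a Chebyshev step then upgrades this to a high-probability bound $\log D_n \geq -K_1 n\eps_n^\alpha$, where the hypothesis on $(\alpha,\beta)$ is precisely what ensures that the first-order contribution $\eta n\eps_n$ is absorbed, up to constants, into the prior-mass exponent $C' n\eps_n^\alpha$.

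For the numerator I would peel $A_n$ into shells $S_j = \{\theta : 2^{j-1}M_n\eps_n < d(\theta,\theta^\star) \leq 2^j M_n\eps_n\}$ for $j \geq 1$. Using the empirical-process decomposition $R_n(\theta) - R_n(\theta^\star) = m(\theta,\theta^\star) + n^{-1/2}\mathbb{G}_n(\ell_\theta - \ell_{\theta^\star})$, condition \eqref{eq:conc_num1} gives $m(\theta,\theta^\star) \gtrsim (2^{j-1}M_n\eps_n)^\alpha$ on $S_j$, while \eqref{eq:conc_num2} controls the expected uniform fluctuation of the second summand over the enlarged shell by $n^{-1/2}(2^j M_n\eps_n)^\beta$. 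With $\eps_n = n^{-1/(2\alpha-2\beta)}$, the separation term outpaces the fluctuation by a factor of $(2^j M_n)^{\alpha-\beta}$, so for $M_n$ large enough the integrand on $S_j$ is bounded, in $P^n$-probability via Markov, by $e^{-c\eta n (2^{j-1} M_n \eps_n)^\alpha}$, and summing the resulting geometric tail in $j$ yields $N_n \lesssim e^{-c\eta n(M_n\eps_n)^\alpha}$ with high probability.

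Combining the two bounds, $N_n/D_n \lesssim \exp\bigl(K_1 n\eps_n^\alpha - c\eta n(M_n\eps_n)^\alpha\bigr)$, which vanishes as $M_n \to \infty$ (or as soon as $M_n^\alpha c\eta > K_1$, i.e., $M_n$ is a sufficiently large constant). The main technical obstacle is the uniform empirical-process control across all shells simultaneously: \eqref{eq:conc_num2} is stated as a local bound near $\theta^\star$, so either a global envelope/Lipschitz hypothesis or a separate shell-by-shell argument is needed, and the bookkeeping involved in Markov-lifting the expectation bound to a $P^n$-probability bound is what forces the split into the two regimes. In the flatter regime $\alpha \leq 1$ the stochastic fluctuation can compete with the curvature of $R$, so the condition $\alpha - \beta > 1/2$ is needed to preserve a strict separation of exponents; in the curved regime $\alpha > 1$ the condition $\alpha - 2\beta > 0$ is what guarantees that the Jensen correction $\eta n\eps_n$ on the denominator remains subordinate to the prior-mass exponent $n\eps_n^\alpha$.
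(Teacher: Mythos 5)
Your overall architecture---the ratio decomposition $N_n/D_n$, a prior-mass/Chebyshev lower bound on the denominator, and a peeling argument with empirical-process control for the numerator---is the same as the paper's, and your numerator treatment (shells, separation from \eqref{eq:conc_num1}, Markov applied to \eqref{eq:conc_num2} shell by shell, geometric summation) matches the paper's Lemma~\ref{lem:conc_num} essentially step for step. The gap is in the denominator. You localize to $B=\Theta(C\eps_n)$, on which $m(\theta,\theta^\star)\vee v(\theta,\theta^\star)\le C\eps_n$, so Jensen gives
\[
\log D_n \;\geq\; \log\Pi(B) \;-\; \eta n\int_B m(\theta,\theta^\star)\,\Pi(d\theta\mid B)\;-\;\text{(deviation)} \;\geq\; -C'n\eps_n^\alpha-\eta C\,n\eps_n-\text{(deviation)},
\]
and you assert that the $\eta n\eps_n$ term is absorbed into $n\eps_n^\alpha$ by the hypotheses on $(\alpha,\beta)$. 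It is not: for $\alpha>1$ one has $n\eps_n/(n\eps_n^\alpha)=\eps_n^{1-\alpha}\to\infty$, and no condition on $(\alpha,\beta)$ changes that. In the canonical regular case $\alpha=2$, $\beta=1$, $\eps_n=n^{-1/2}$, your argument only guarantees $D_n\ge e^{-Kn^{1/2}}$, while the numerator bound is $e^{-c\eta M_n^\alpha n\eps_n^\alpha}=e^{-c\eta M_n^2}$; the ratio then vanishes only if $M_n\gg n^{1/4}$, which destroys the claimed rate. The same problem afflicts your Chebyshev step: the deviation of the inner integral scaled by $\eta n$ is of order $\eta(C n\eps_n)^{1/2}=\eta\,C^{1/2}n^{1/4}$ here, again far exceeding $n\eps_n^\alpha=1$.

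The fix, which is what the paper's Lemma~\ref{lem:conc_den} does, is to localize to the smaller neighborhood $\Theta_n=\{\theta:m(\theta,\theta^\star)\vee v(\theta,\theta^\star)\le C\eps_n^{\alpha}\}$---note the power $\alpha$ on the radius. There the mean term is $\eta n\cdot C\eps_n^\alpha$ and the stochastic term is of order $\eta\{n v\cdot b_n n\eps_n^\alpha\}^{1/2}\lesssim b_n^{1/2}\,n\eps_n^\alpha$, both of the right size; the real role of the $(\alpha,\beta)$ hypotheses (each regime implies $\alpha>2\beta$) is only to guarantee $n\eps_n^\alpha\to\infty$, so that the Chebyshev failure probability $(b_n n\eps_n^\alpha)^{-1}$ vanishes---not to make $n\eps_n$ subordinate to $n\eps_n^\alpha$. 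Whether you then reach the bound $D_n\gtrsim\Pi(\Theta_n)e^{-2b_n\eta n\eps_n^\alpha}$ via Jensen (as you propose) or via the paper's standardization-plus-Fubini--Chebyshev argument is a cosmetic difference; the substantive point is the radius $\eps_n^\alpha$, and correspondingly the prior condition must be imposed on $\Theta(C\eps_n^\alpha)$ rather than on $\Theta(C\eps_n)$ for the proof to close in the $\alpha>1$ regime.
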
}

\begin{thm}
\label{thm:conc}
Consider a finite-dimensional $\theta$, taking values in $\Theta\subseteq \mathbb{R}^q$ for some $q\geq 1$.  Suppose $(P,\ell_\theta)$ are such that \eqref{eq:conc_num} hold with constants $(\alpha,\beta)$ satisfying $\alpha \geq 2\beta$, and define $\eps_n=n^{-1/(2\alpha - 2\beta)}$.  If the prior $\Pi$ satisfies 
\begin{equation}
\label{eq:conc_den}
\Pi\{\Theta(\eps_n)\} \gtrsim \eps_n^q,
%\Pi(\Theta_n) \gtrsim e^{-C n s_n^\alpha}
\end{equation}
then \eqref{eq:rate} holds and the Gibbs posterior $\Pi_n^{(\eta)}$ has concentration rate $\eps_n$ at $\theta^\star$.
\end{thm}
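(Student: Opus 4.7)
The approach follows the standard ratio argument for posterior concentration, in the spirit of Ghosal, Ghosh, and van der Vaart. Fix $M$ (a large constant or a slowly diverging $M_n$), set $A_n = \{\theta : d(\theta, \theta^\star) > M \eps_n\}$, and after cancelling the factor $e^{-\eta n R_n(\theta^\star)}$ write
\begin{equation*}
\Pi_n^{(\eta)}(A_n) = \frac{N_n}{D_n}, \qquad N_n = \int_{A_n} e^{-\eta n[R_n(\theta) - R_n(\theta^\star)]}\,\Pi(d\theta), \quad D_n = \int_\Theta e^{-\eta n[R_n(\theta) - R_n(\theta^\star)]}\,\Pi(d\theta).
\end{equation*}
The plan is to show $N_n/D_n \to 0$ in $P$-probability by bounding $D_n$ from below and $N_n$ from above on a common high-probability event.

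For the denominator, I would apply Jensen's inequality with the probability measure $\rho = \Pi(\,\cdot\, \cap \Theta(\eps_n))/\Pi\{\Theta(\eps_n)\}$ to obtain
\begin{equation*}
\log\bigl(D_n/\Pi\{\Theta(\eps_n)\}\bigr) \geq -\eta n \int [R_n(\theta) - R_n(\theta^\star)]\, \rho(d\theta).
\end{equation*}
The $\rho$-averaged empirical risk difference has $P$-mean $\int m(\theta,\theta^\star)\,\rho(d\theta) \leq \eps_n$ by the $m$-constraint of $\Theta(\eps_n)$, and, by Cauchy--Schwarz with the bound $v \leq \eps_n$ on $\Theta(\eps_n)$, its $P$-variance is at most $\eps_n/n$. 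Chebyshev together with the prior mass hypothesis \eqref{eq:conc_den} then yields, on an event of $P$-probability tending to one, a lower bound of the form $D_n \gtrsim \eps_n^q\, e^{-C_1 n \eps_n}$.

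For the numerator, I would peel $A_n$ into dyadic shells $S_j = \{\theta : M 2^j \eps_n < d(\theta,\theta^\star) \leq M 2^{j+1}\eps_n\}$, $j \geq 0$. The separation condition \eqref{eq:conc_num1} gives $R(\theta) - R(\theta^\star) \gtrsim (M 2^j \eps_n)^\alpha$ on $S_j$, while the local modulus bound \eqref{eq:conc_num2} with Markov's inequality yields $\sup_{\theta \in S_j} |\mathbb{G}_n(\ell_\theta - \ell_{\theta^\star})| \lesssim (M 2^{j+1}\eps_n)^\beta$ except on a small-probability event. The calibration $\eps_n = n^{-1/(2\alpha-2\beta)}$ is exactly the balance $\eps_n^\alpha = n^{-1/2}\eps_n^\beta$, so the ratio of separation to fluctuation on $S_j$ equals $(M 2^j)^{\alpha-\beta}$, which exceeds any fixed threshold once $M$ (or $M_n$) is taken large enough. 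Hence uniformly in $\theta \in S_j$ on the corresponding good event, $R_n(\theta) - R_n(\theta^\star) \geq c(M 2^j \eps_n)^\alpha$, so the integral of the Gibbs weight over $S_j$ is at most $\Pi(S_j)\,\exp\{-\eta c\, n (M 2^j \eps_n)^\alpha\}$.

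Combining the two steps,
\begin{equation*}
\Pi_n^{(\eta)}(A_n) \lesssim \eps_n^{-q}\, e^{C_1 n\eps_n} \sum_{j \geq 0} \Pi(S_j)\, \exp\{-\eta c\, n (M 2^j \eps_n)^\alpha\}.
\end{equation*}
Under the hypothesis $\alpha \geq 2\beta$ one has $n \eps_n^\alpha = n^{(\alpha-2\beta)/(2(\alpha-\beta))} \geq 1$, so choosing $M$ (or $M_n$) sufficiently large makes the shell exponent dominate both the polynomial factor $\eps_n^{-q}$ and the denominator cost $n\eps_n$, while the super-geometric growth of $(M 2^j)^\alpha$ in $j$ renders the sum convergent. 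The step I expect to be the main obstacle is upgrading the marginal empirical process bound \eqref{eq:conc_num2} to a single uniform-in-$\theta \in S_j$ lower bound on $R_n - R_n(\theta^\star)$ that holds simultaneously across all shells on one event of probability tending to one. In the finite-dimensional setting this can be handled by summing shell-wise Markov bounds, with the gap $\alpha > \beta$ ensuring that the resulting series of failure probabilities is summable; some mild tail control on $\Pi$ (or boundedness of $\Theta$) is needed to truncate the outer shells where $2^j \eps_n$ ceases to be small.
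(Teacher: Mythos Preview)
Your overall architecture---the numerator/denominator split, a small-ball lower bound on $D_n$, and a shells decomposition of $N_n$---is exactly the paper's. The numerator step is essentially right: the paper uses arithmetic shells $\{tM_n\eps_n < d \leq (t+1)M_n\eps_n\}$ rather than your dyadic ones, and it absorbs the ``simultaneous over all shells'' issue by inserting a slowly diverging factor $a_n$ in the Markov bound on $\sup|\GG_n(\ell_\theta-\ell_{\theta^\star})|$, but the mechanics match yours.

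The gap is in the denominator. By localizing to $\Theta(\eps_n)=\{m\vee v\leq \eps_n\}$ you get $\int m\,d\rho\leq \eps_n$ and hence $D_n\gtrsim \eps_n^q e^{-C_1 n\eps_n}$. But the numerator exponent on the first shell is only of order $M^\alpha n\eps_n^\alpha$. For the ratio to vanish with $M=M_n$ either a constant or diverging \emph{arbitrarily slowly} (which is what the definition of concentration rate demands), you need $M^\alpha n\eps_n^\alpha\gg n\eps_n$, i.e.\ $M^\alpha\gg \eps_n^{1-\alpha}$. When $\alpha>1$---in particular in the regular case $\alpha=2$, $\beta=1$, $\eps_n=n^{-1/2}$---this forces $M\gg n^{(\alpha-1)/\{2\alpha(\alpha-\beta)\}}$, e.g.\ $M\gg n^{1/4}$ in the regular case. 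That is a polynomial rate, not a slowly diverging sequence, so your final sentence ``choosing $M$ (or $M_n$) sufficiently large makes the shell exponent dominate \ldots\ the denominator cost $n\eps_n$'' is false precisely in the most standard setting, and the argument as written does not deliver the stated rate $\eps_n$.

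The paper closes this by localizing the denominator at the tighter scale $\{m\vee v\lesssim \eps_n^\alpha\}$: then the mean and variance bounds produce an exponent $b_n n\eps_n^\alpha$ (with $b_n\to\infty$ arbitrarily slowly), which matches the numerator's $M_n^\alpha n\eps_n^\alpha$ and lets any $M_n\to\infty$ with $M_n^\alpha\gg b_n$ do the job. Your Jensen route works perfectly well once you use that scale; the paper instead standardizes $R_n(\theta)-R_n(\theta^\star)$, applies Chebyshev pointwise in $\theta$, and Fubinis against the prior, but that is a cosmetic difference. The substantive fix is the $\eps_n^\alpha$ (not $\eps_n$) radius for the denominator neighborhood.
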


\begin{proof}
See Appendix~\ref{proof:thm2}.
\end{proof}

In regular finite-dimensional problems, where ``regular'' means that the empirical risk has a certain degree of smoothness, the Gibbs posterior concentration rate would be root-$n$, i.e., the conditions of Theorem~\ref{thm:conc} could be checked with $\alpha=2$ and $\beta=1$, so that $\eps_n = n^{-1/2}$.   For example, in the quantile problem from Section~\ref{SSS:quantile}, since the loss is Lipschitz, control over the random fluctuations follows from standard results, e.g., Corollary~19.35 in \citet{vaart1998}.  In particular, \eqref{eq:conc_num2} holds with $\beta=1$ where $d(\theta, \theta^\star) = |\theta-\theta^\star|$.  Moreover, if $P$ admits a density function that is positive at $\theta^\star$, the risk $R$ is approximately quadratic in a neighborhood of $\theta^\star$, so \eqref{eq:conc_num1} holds with $\alpha=2$.  Putting this together, if the prior density is bounded away from 0 in a neighborhood of $\theta^\star$, then it follows from Theorem~\ref{thm:conc} that the Gibbs posterior concentrates at a root-$n$ rate.  But rates faster and slower than root-$n$ are possible outside of these ``regular'' problems.  A good example is the MCID application in Section~\ref{SSS:mcid}: as \citet{syring.martin.mcid} show, the rate can be as fast as $n^{-1}$ and as slow as $n^{-1/3}$, depending on certain features of the underlying $P$.  

The proof of Theorem~\ref{thm:conc} can easily be adapted to handle infinite-dimensional $\theta$ by modifying Lemmas~\ref{lem:conc_den}--\ref{lem:conc_num} in Appendix~\ref{S:proofs}; see, e.g., \citet{syring.martin.image, gibbs.general}.

\subsection{Distributional approximations}
\label{SS:bvm}

Beyond consistency and rates, there are cases in which the Gibbs posterior enjoys a version of the celebrated {\em Bernstein--von Mises} theorem, i.e., that the Gibbs posterior takes on a Gaussian shape asymptotically.  This was demonstrated for a special case in \citet{gibbs.quantile} but their results are generalized below.

%{\color{red} Say something about the history, where the name comes from, etc.?} 
Common folklore is that the Bernstein--von Mises theorem guarantees the Bayesian posterior will be asymptotically calibrated in the sense that its credible regions will agree with the frequentist confidence regions, hence that Bayesian inference would be at least approximately valid, in a frequentist sense, for large $n$.  This suggest a best-of-both-worlds conclusion, i.e., that one can have both the appeal of doing formal probabilistic inference with Bayes's theorem and frequentist error rate guarantees.  What makes this ``folklore'' is that it holds only in well-specified model cases.  When the model is misspecified, a Bernstein--von Mises theorem can still be established, modulo regularity conditions, but it does not enjoy the same best-of-both-worlds interpretation as in the well-specified case.  Indeed, \citet{kleijn.vaart.2012} show that, while the misspecified Bayes posterior may still be asymptotically normal, misspecification bias creates a mismatch between the limiting posterior covariance and that of the sampling distribution of the posterior mean.  This covariance mismatch implies, e.g., that the frequentist coverage probability of the Bayesian posterior credible regions can be arbitrarily far from the nominal level.

%\footnote{\color{red} I'd like to use the line ``frequentist license for Bayesian practice,'' from an old paper by Cosma Shalizi, to then say that this doesn't apply under misspecification. I'm a bit worried that this might come across as Shalizi having said something wrong, which he didn't---his remark is in a different context.}  

While the Bernstein--von Mises theorem in misspecified model cases does not have the same strong implications as in well-specified cases, it is still an interesting theoretical result.  Moreover, when applied to a generalized/Gibbs posterior, the result is practically relevant because it sheds light on the learning rate's role in the limiting posterior, which in turn can be informative for the data-driven tuning discussed below.  

For the situations involving iid data under consideration here, it suffices to consider those cases where the Gibbs posterior concentrates at rate $\eps_n = n^{-1/2}$.  As explained above in Section~\ref{ss:conc}, a root-$n$ rate is common in fixed, finite-dimensional problems where loss function has a certain degree of smoothness; here we let $q$ denote the finite dimension of $\theta$ so that $\Theta \subseteq \RR^q$.  This includes our simple running example of inference on a quantile, with $q=1$.  In order to say more about the limiting shape of the Gibbs posterior, even more smoothness of the loss is required.  In the classical theory of well-specified parametric Bayes models, e.g., Theorem~4.2 of \citet{ghosh-etal-book}, the sufficient conditions for the Bernstein--von Mises theorem include twice differentiability of the log-likelihood.  More modern approaches based on local asymptotic normality, e.g.,  \citet[][Ch.~6]{lecam.yang.book} and \citet[][Ch.~7]{vaart1998}, provide some additional flexibility.  Here we follow this more modern approach and assume only that 
\begin{itemize}
\item $R$ is twice differentiable at $\theta^\star$, with $\dot R(\theta^\star) = 0$ and $V_{\theta^\star} := \ddot R(\theta^\star)$ positive definite;
\item the loss function $\theta \mapsto \ell_\theta(x)$ can be differentiated for $P$-almost all $x$.
\end{itemize}
Here, dot and double-dot correspond to first and second derivatives with respect to $\theta$, so $\dot R$ and $\ddot R$ denote the gradient vector and the Hessian matrix, respectively.  Note that this approach avoids assuming the loss is twice differentiable as would be required under the classical theory.  In our case, the local asymptotic normality condition takes the form 
\begin{equation}
\label{eq:lan}
\sup_{h \in K} \bigl| D_n(h; \theta^\star) -  h^\top V_{\theta^{\star}} \Delta_{n,\theta^{\star}}-\tfrac{1}{2}h^\top V_{\theta^{\star}}h \bigr| = o_P(1), \quad \text{all compact $K \subset \RR^q$}, 
\end{equation}
where $D_n(h; \theta^\star) = n\{R_n(\theta^{\star}+h n^{-1/2}) - R_n(\theta^{\star})\}$ is the scaled local empirical risk difference at $\theta^\star$, $V_{\theta^\star}$ is as defined above, and 
\[ \Delta_{n,\theta^\star} = n^{-1/2} \sum_{i=1}^n V_{\theta^\star}^{-1} \dot\ell_{\theta^\star}(X_i). \]
The intuition behind \eqref{eq:lan} is that the empirical risk difference is approximately quadratic, locally near $\theta^\star$, but in a sense that does not require twice differentiabilty.  And, as usual, a quadratic approximation appearing in the exponent suggests a Gaussian approximation, where the matrix appearing in the quadratic term determines the Gaussian's variance.  

\begin{thm}
\label{thm:bvm}
Suppose the Gibbs posterior $\Pi_n^{(\eta)}$, for fixed $\eta > 0$, has concentration rate $n^{-1/2}$.  In addition, if the loss function $\ell_\theta$ is such that \eqref{eq:lan} holds, then the sequence of appropriately centered and scaled Gibbs posteriors approaches a sequence of $q$-variate normal distributions in total variation, as $n \to \infty$; that is,  
\begin{equation}
\label{eq:bvm}
\sup_B\bigl| \Pi_n^{(\eta)}(\{\theta: n^{1/2}(\theta-\theta^{\star}) \in B\})-\nm_q(B \mid \eta\Delta_{n,\theta^{\star}},{(\eta V_{\theta^{\star}})}^{-1})\bigr| = o_P(1). 
\end{equation}
\end{thm}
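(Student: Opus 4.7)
The plan is to follow the classical Le Cam local-reparametrization strategy, adapted to the Gibbs posterior as has been done in the misspecified-Bayes literature. Change variables via $h = n^{1/2}(\theta-\theta^\star)$ and divide numerator and denominator of $\Pi_n^{(\eta)}$ by the $\theta$-free constant $e^{-\eta n R_n(\theta^\star)}$; the induced density of $h$ is then proportional to
\[
g_n(h) \,:=\, \exp\{-\eta D_n(h;\theta^\star)\}\, \pi(\theta^\star + h\,n^{-1/2}).
\]
The task is to show $g_n$, suitably normalized, converges in $L^1$ to the density of the Gaussian appearing in \eqref{eq:bvm}; this $L^1$ convergence is equivalent to the stated total-variation conclusion.

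The first main step is localization. Because $\Pi_n^{(\eta)}$ has concentration rate $n^{-1/2}$ by hypothesis, one can choose $M_n\to\infty$ slowly enough that the $h$-mass of $\{|h|>M_n\}$ under $\Pi_n^{(\eta)}$ is $o_P(1)$; the target Gaussian also assigns vanishing mass to this complement. It therefore suffices to compare $g_n$ with the Gaussian kernel on the ball $\{|h|\leq M_n\}$. There the LAN expansion \eqref{eq:lan} gives
\[
-\eta D_n(h;\theta^\star) \,=\, -\eta\, h^\top V_{\theta^\star} \Delta_{n,\theta^\star} - \tfrac{\eta}{2} h^\top V_{\theta^\star} h + o_P(1),
\]
uniformly in $h$. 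Completing the square in $h$ produces exactly the Gaussian kernel with covariance $(\eta V_{\theta^\star})^{-1}$ and the centering appearing in \eqref{eq:bvm}. The prior is disposed of by the standing assumption that $\pi$ is continuous and strictly positive at $\theta^\star$: $\pi(\theta^\star + h\,n^{-1/2}) = \pi(\theta^\star)\{1+o_P(1)\}$ uniformly on $\{|h|\leq M_n\}$, so it cancels in the ratio. A Scheff\'e-type argument then converts the resulting pointwise density convergence plus tail negligibility into $L^1$ convergence of the densities.

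The main technical obstacle is extending the uniform LAN approximation in \eqref{eq:lan} from fixed compacts to the slowly growing ball $\{|h|\leq M_n\}$. A standard remedy is a peeling argument: partition the annulus $\{M<|h|\leq M_n\}$ into shells of geometrically increasing radius, and on each shell dominate the $\mathbb{G}_n$-fluctuations of \eqref{eq:conc_num2} by the quadratic separation of $R$ near $\theta^\star$---condition \eqref{eq:conc_num1} with $\alpha=2$, which is implicit in the root-$n$ rate---so that the integral of $g_n$ over that shell is exponentially small. Summing over shells, together with direct use of the concentration rate on $\{|h|>M_n\}$, reduces everything to the compact form of LAN already in hand. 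The remaining bookkeeping---matching normalizing constants on both sides and passing from pointwise to total-variation convergence---is by now routine, parallel to what is done in the proof of Theorem~\ref{thm:conc}.
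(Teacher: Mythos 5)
Your overall strategy---localize to $\{|h|\le M_n\}$ using the assumed root-$n$ concentration, apply the LAN expansion \eqref{eq:lan}, complete the square, and conclude total-variation convergence---is sound, but its technical core differs from the paper's. The paper simply ports the argument of Theorem~2.1 of Kleijn and van der Vaart (2012), replacing the likelihood ratio by $\exp[-\eta n\{R_n(\theta)-R_n(\theta^\star)\}]$. That argument never compares densities directly: it works with the measures renormalized to a \emph{fixed} compact $K$, bounds their total-variation distance by $\sup_{g,h\in K}\bigl(1-\phi_n(h)s_n(g)\pi_n(g)/\{\phi_n(g)s_n(h)\pi_n(h)\}\bigr)_+$ via Jensen's inequality, and only afterwards passes to slowly growing balls $K_n$ by a diagonal argument, using the concentration rate to kill the posterior mass outside $K_n$ and tightness of $\Delta_{n,\theta^\star}$ to kill the Gaussian mass. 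The payoff of that route is that normalizing constants cancel in the ratio and, crucially, \eqref{eq:lan} is only ever needed on fixed compacts---exactly the form in which it is assumed. Your route (pointwise density convergence plus Scheff\'e) is a legitimate alternative, and your localization step is the right first move, but it forces you to confront uniformity of the LAN approximation on the growing ball, which you then patch with a peeling argument invoking \eqref{eq:conc_num2} and \eqref{eq:conc_num1} with $\alpha=2$, $\beta=1$. Those are \emph{not} hypotheses of Theorem~\ref{thm:bvm}: the theorem assumes only the $n^{-1/2}$ concentration rate and \eqref{eq:lan}, and a root-$n$ rate does not formally imply \eqref{eq:conc_num} with those exponents. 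So as written your proof proves a slightly different theorem. The fix is to drop the peeling entirely and use the same device as the paper: establish the approximation on each fixed compact $K_m$, then choose radii $M_n\to\infty$ slowly enough (a standard diagonal lemma) that the compact-set conclusion survives, with the excess posterior mass outside $\{|h|\le M_n\}$ controlled directly by the assumed concentration rate. You should also state explicitly the continuity and positivity of the prior density at $\theta^\star$, which you use but which is only implicit in the paper.
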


\begin{proof}
The proof is virtually identical to that of Theorem~2.1 in \citet{kleijn.vaart.2012} so we will not reproduce the details here.  The only difference is that their likelihood ratio would be replaced by our $\exp[-\eta n \{R_n(\theta) - R_n(\theta^\star)\}]$. 
%See Appendix~\ref{proof:thm3}.
\end{proof}

It would often be the case \citep[e.g.,][Theorem~5.7]{vaart1998} that the empirical risk minimizer, $\hat\theta_n$ satisfies 
\[ n^{1/2}(\hat\theta_n - \theta^\star) = \Delta_{n,\theta^\star} + o_P(1). \]
Then it follows from Theorem~\ref{thm:bvm} above and the location shift invariance of the total variation distance, that \eqref{eq:bvm} can be re-expressed as 
\begin{equation}
\label{eq:bvm2}
\sup_B\bigl| \Pi_n^{(\eta)}(B)-\nm_q(B \mid \eta \hat\theta_n + (1-\eta)\theta^\star, {(\eta nV_{\theta^{\star}})}^{-1})\bigr| = o_P(1), \quad n \to \infty.
\end{equation}

Compared to \eqref{eq:bvm}, the form in \eqref{eq:bvm2}\footnote{In \citet{gibbs.quantile}, the effect of $\eta$ on the asymptotic Gibbs posterior mean was overlooked---they stated the posterior mean was $\hat\theta_n$ instead of $\eta\hat\theta_n + (1-\eta)\theta^\star$ as in \eqref{eq:bvm2}. This small effect went unnoticed because the learning rate suggested in the former case ends up being larger than in the latter, hence more conservative Gibbs posterior credible regions.} makes it easier to see the effect of the learning rate $\eta$.  Of course, when $\eta=1$, this looks exactly like the Gaussian approximation presented \citet[][p.~362]{kleijn.vaart.2012}.  The effect of a value $\eta < 1$ on the mean is negligible, since $\hat\theta_n \approx \theta^\star$ when $n$ is large.  For the covariance, the effect of $\eta$ can be more substantial, hopefully in a good way.  Towards this, recall \citep[e.g.,][Theorem~5.23]{vaart1998, muller2013} that the empirical risk minimizer, or M-estimator, has covariance matrix $n^{-1} \Sigma_{\theta^\star}$, where $\Sigma_{\theta^\star}$ is the sandwich covariance matrix
\begin{equation}
\label{eq:sandwich}
\Sigma_{\theta^\star} = V_{\theta^\star}^{-1} \, P(\dot\ell_{\theta^\star} \dot\ell_{\theta^\star}^\top) \, V_{\theta^\star}^{-1}. 
\end{equation}
In general, $\Sigma_{\theta^\star} \neq V_{\theta^\star}^{-1}$, which is the aforementioned covariance mismatch.  However, there are cases when a so-called {\em generalized information equality} \citep{chernozhukov.hong.2003}, which states that $\Sigma_{\theta^\star} = \gamma V_{\theta^\star}^{-1}$, for some scalar $\gamma > 0$.  In such cases, there exists $\eta$ such that the covariance matrix of the Gibbs posterior mean, $\eta^2 \Sigma_{\theta^\star}$, matches the Gibbs posterior covariance matrix, $(\eta V_{\theta^\star})^{-1}$, i.e., 
\[ \eta^2 \Sigma_{\theta^\star} = \eta^{-1} V_{\theta^\star}^{-1} \iff \eta = \gamma^{-1/3}. \]
So there is a learning rate value that corrects the covariance mismatch and leads to valid Gibbs posterior inference, at least asymptotically.  More generally, we can expect that there is some value of $\eta$ for which the above relationship holds at least approximately.  This begs the question: {\em how might that learning rate value be found?}

\section{Learning rate selection}
\label{S:learning.rate}

In Sections~\ref{S:intro}--\ref{S:gibbs}, we emphasized the importance of the learning rate, but then the learning rate was mostly irrelevant in the theoretical results presented in Section~\ref{S:theory}.  The reason for this apparent discrepancy is that the theoretical results are all ``first-order'' in the sense that they only describe features of the Gibbs posterior relevant to estimation; no ``higher-order'' claims about accuracy of inference based on the Gibbs posterior have been made.  Even the Bernstein--von Mises result, despite being distributional in nature, provides no inference guarantees in our under- or misspecified model setting the way the analogous result does in a well-specified model setting.  So, what we said in Sections~\ref{S:intro}--\ref{S:gibbs} remains true: the learning rate needs to be chosen carefully in practical applications to ensure that inferences drawn based on the Gibbs posterior are reliable.  

As mentioned in Section~\ref{S:intro}, data-driven learning rate selection methods has been an active area of research in recent years.  A number of novel ideas have been put forth, from different perspectives and with distinct objectives.  This includes the methods by \citet{grunwald2012}, \citet{grunwald.ommen.scaling}, \citet{holmes.walker.scaling}, \citet{lyddon.holmes.walker}, and \citet{syring.martin.scaling}. We will focus the presentation here on our preferred method, the {\em general posterior calibration} or {\em GPC} algorithm.  This is our preferred method not just because we developed it, but also because it has the best empirical performance---in terms of coverage probability of credible sets---across various settings and sample sizes compared to the other methods; see \citet{gpc.compare}.  In what follows, we explain what the GPC algorithm aims to do, give some heuristics for why it works, and then describe the algorithm and its implementation details.  

First, we need to justify an important but basic claim, namely, that the primary role played by the learning rate, $\eta$, in the Gibbs posterior, $\Pi_n^{(\eta)}$, defined in \eqref{eq:gibbs}, is to control the spread.  As we explained in Section~\ref{SS:theory.basics}, the prior's influence will be rather limited, at least when $n$ is large, so the Gibbs posterior density is, as in \eqref{eq:gibbs.density}, effectively proportional to $\exp\{-\eta n R_n(\theta)\}$, for $\theta \in \Theta$.  Since $R_n$ is minimized at $\hat\theta_n$, independent of $\eta$, we see that $\eta$ can only be affecting the spread of the Gibbs posterior, with small $\eta$ making the posterior wider, more diffuse, and large $\eta$ making the posterior narrower, more concentrated at $\hat\theta_n$.  The posterior consistency result in Theorem~\ref{thm:cons} formalizes this. 

So if the learning rate controls the spread of the posterior, it also must influence the coverage probability of Gibbs posterior credible regions.  That is, let $C_\alpha^{(\eta)}(T^n)$ denote a $100(1-\alpha)$\% Gibbs posterior credible region, based on data $T^n$; for example, it could be a highest posterior density (HPD) region give by  
\[ C_\alpha^{(\eta)}(T^n) = \{\theta: \pi_n^{(\eta)}(\theta) > k(\alpha; \eta)\}, \]
where $\pi_n^{(\eta)}$ is the Gibbs posterior density in \eqref{eq:gibbs.density} and $k(\alpha; \eta)$ is the cutoff chosen to ensure that the region has $\Pi_n^{(\eta)}$-probability $1-\alpha$.  However, it is worth pointing out that it is not necessary that the credible region be for the entire unknown $\theta$, it could be just for some relevant feature $\psi=\psi(\theta)$.  Now define the (frequentist) coverage probability function 
\begin{equation}
\label{eq:cvgfun}
c_\alpha(\eta) = c_\alpha(\eta; P) = P\{ C_\alpha^{(\eta)}(T^n) \ni \theta(P)\},
\end{equation}
where, here, we have made the notation explicitly reflect the dependence of the inferential target ``$\theta^\star$'' on the underlying probability $P$.  Since $\eta$ controls the spread of the Gibbs posterior, it likewise controls the size of the credible regions.  Therefore, $\eta \mapsto c_\alpha(\eta)$ is decreasing, so we ought to be able to find a learning rate value that gets the coverage probability close the advertised/nominal frequentist level.  That is, we seek
\begin{equation}
\label{eq:gpc.target}
\eta^\star = \eta^\star(\alpha; P) = \sup\{\eta > 0: c_\alpha(\eta) \geq 1-\alpha\}.
\end{equation}
Such a learning rate $\eta^\star$ would {\em calibrate} the Gibbs posterior in the sense that its $100(1-\alpha)$\% credible region would have (frequentist) coverage probability at least $1-\alpha$, i.e., the uncertainty quantification would be valid or honest.  

Before moving on to describe what the {\em oracle} learning rate $\eta^\star$ looks like and how to approximate it in a data-driven way, it is important that we dispel with some of the optimism that stems from our many experiences focusing on well-specified models.  That is, at least in general, we cannot expect to find a single $\eta^\star$ to achieve the desired calibration for every $\alpha$, $n$, $P$, relevant feature $\psi=\psi(\theta)$, etc.  When dealing with an under- or misspecified model, we have to choose which battles we want to win and, in our present case, we have chosen to focus on the particular $100(1-\alpha)$\% credible region and choosing the learning rate $\eta^\star$ that ensures its coverage attains the nominal frequentist level.  For other objectives, a different $\eta^\star$ would be required.  

For those special cases described in Section~\ref{SS:bvm}, where the Gibbs posterior is asymptotically normal, we can shed some light on what the oracle $\eta^\star$ looks like.  From Theorem~\ref{thm:bvm} and, in particular, the version of the conclusion in \eqref{eq:bvm2}, we have that the $100(1-\alpha)$\% credible region associated with the Gibbs posterior distribution $\Pi_n^{(\eta)}$ has the form 
\[ \{\vartheta: n(\vartheta - \hat\theta_n^{(\eta)})^\top (\eta  V_{\theta^\star}) (\vartheta - \hat\theta_n^{(\eta)}) \leq k_\alpha\}, \]
where $k_\alpha$ is the upper-$\alpha$ quantile the chi-square distribution with $q$ degrees of freedom, and $\hat\theta_n^{(\eta)} = \eta \hat\theta_n + (1-\eta)\theta^\star$.  Then the coverage probability of the credible region is 
\[ c_\alpha(\eta) = P\{n(\hat\theta_n - \theta^\star)^\top (\eta^3 V_{\theta^\star})(\hat\theta_n - \theta^\star) \leq k_\alpha\}, \]
where the probability is with respect to the sampling distribution of $\hat\theta_n$ under $P$.  Under the assumptions of Theorem~\ref{thm:bvm}, we also have that $\hat\theta_n \sim \nm_q(\theta^\star, n^{-1}\Sigma_{\theta^\star})$, for large $n$, with the latter covariance matrix defined in Section~\ref{SS:bvm}.  So, if it happened that $\eta^3 V_{\theta^\star} = \Sigma_{\theta^\star}^{-1}$, then the coverage probability would be approximately equal to $1-\alpha$.  More generally, the quadratic form in the above display is no smaller than 
\[ n (\hat\theta_n - \theta^\star)^\top \Sigma_{\theta^\star}^{-1} (\hat\theta_n - \theta^\star) \times \eta^3 \, \lambda_{\text{min}}(\Sigma_{\theta^\star}^{1/2} V_{\theta^\star} \Sigma_{\theta^\star}^{1/2}), \]
where $\Sigma_{\theta^\star}^{1/2}$ is a suitable square root matrix of $\Sigma_{\theta^\star}$ and $\lambda_{\text{min}}$ is the minimum eigenvalue operator.  Therefore, for large $n$, the coverage probability satisfies 
\[ c_\alpha(\eta) \geq \prob\Bigl\{ \chisq(q) \leq \frac{k_\alpha}{\eta^3 \, \lambda_{\text{min}}(\Sigma_{\theta^\star}^{1/2} V_{\theta^\star} \Sigma_{\theta^\star}^{1/2})} \Bigr\}. \]
So it is clear that the set over which the supremum in \eqref{eq:gpc.target} is taken is non-empty and, moreover, the oracle $\eta^\star$ would be roughly 
\[ \eta^\star \approx \{ \lambda_{\text{min}}(\Sigma_{\theta^\star}^{1/2} V_{\theta^\star} \Sigma_{\theta^\star}^{1/2})\}^{-1/3}. \]

The obvious question is how can we find the $\eta^\star$ in \eqref{eq:gpc.target}?  Since we do not know $P$, we clearly cannot evaluate the coverage probability function $c_\alpha(\eta)$ and, therefore, we cannot solve the equation $c_\alpha(\eta) = 1-\alpha$.  We can, however, obtain roughly unbiased estimates of the coverage probability function at any fixed $\eta$ via the bootstrap \citep{davison.hinkley.1997, efrontibshirani1993, efron1979}.  That is, if we let $\tilde T_b^n$ denote an iid sample of size $n$ from the empirical distribution, $\PP_n$, of the observed data $T^n$, for $b=1,\ldots,B$, then a bootstrap approximation of the coverage probability function is 
\[ \hat c_\alpha^\text{boot}(\eta) = \frac1B \sum_{b=1}^B 1\{ C_\alpha^{(\eta)}(\tilde T_b^n) \ni \hat\theta_n\}. \]
Alternatively, this can be viewed as a Monte Carlo approximation of the plug-in estimate 
\[ \hat c_\alpha(\eta) = c_\alpha(\eta; \PP_n) = \PP_n\{ C_\alpha^{(\eta)}(\tilde T^n) \ni \theta(\PP_n)\}. \]
Since these coverage probability functions are only estimates/approximations, we need to acknowledge the variability in how we solve the equation ``$c_\alpha(\eta) = 1-\alpha$.''  For this, we apply the stochastic approximation procedure of \citet{robbinsmonro}, which chooses an initial guess $\eta_0$ and defines an sequence of candidate solutions 
\begin{equation}
\label{eq:sa.update}
\eta_s = \eta_{s-1} + \kappa_s\{ \hat c_\alpha^\text{boot}(\eta_{s-1}) - (1-\alpha)\}, \quad s \geq 1, 
\end{equation}
where $(\kappa_s) \subset (0,1)$ is a deterministic sequence of step-sizes, e.g., $\kappa_s \propto (1+s)^{-\gamma}$ for $\gamma \in (0.5,1]$.  The stochastic approximation updates terminate when convergence is reached, e.g., if $|\eta_s - \eta_{s-1}|$ is smaller than some specified tolerance.  The output is a learning rate value, $\hat\eta_n$, depending on $T^n$, $\alpha$, and other inputs.  
%The overall GPC algorithm is summarized in {\color{red} Algorithm ??}.  

The reader is sure to notice that the GPC algorithm is potentially computationally intensive.  In particular, it requires posterior computations on $B$ many bootstrap data sets of size $n$ at each iteration of the stochastic approximation update \eqref{eq:sa.update}.  In a misspecified Bayes setting, it is not out of the question that the model is sufficiently simple (e.g., conjugate priors) that posterior computations can be done in more-or-less closed form, in which case the GPC algorithm is relatively inexpensive.  In a Gibbs setting where the posterior is based on a loss function, it is unlikely that the posterior will be available in closed-form.  Regardless, if Markov chain Monte Carlo methods are required to compute the posterior, then the GPC algorithm is more expensive.  In our experience, however, it is not prohibitively expensive.  For one thing, there does not seem to be a benefit to having large $B$, so we have taken $B$ to be 200--300 in our applications.  Also, the posterior computations for the $B$ bootstrap data sets at a given $\eta$ can be done in parallel.  Finally, there does not seem to be any practical benefit to having a strict convergence criterion for stopping the updates in \eqref{eq:sa.update}, so capping the number of iterations to, say, 10--20 works fine in practice.  Apparently, just having $\eta$ in a neighborhood of the solution to ``$\hat c_\alpha^\text{boot}(\eta)=1-\alpha$'' is enough to achieve practical calibration.

\section{Numerical examples}
\label{S:examples}

\subsection{Quantile regression}
\label{ss:qr}

Suppose we observe data pairs $(X,Y) \sim P$, where $Y$ is the scalar response variable of primary interest and $X$ is a vector covariate.  Quantile regression \citep[e.g.,][]{koenker2005} models the $\tau^{\text{th}}$ conditional quantile of $Y$, given $X=x$, as a linear combination $\theta^\top f(x)$, for a known dictionary of functions $f(x) = (f_1(x), \ldots, f_J(x))$, with $\theta \in \RR^J$ an unknown vector of coefficients.  The quantity of interest is defined as the value $\theta^\star$ that minimizes the risk $R(\theta) = P\ell_\theta$, where the loss is the so-called {\em check loss function}
\[\ell_\theta(x,y) = \bigl(y - \theta^\top f(x) \bigr) \bigl(\tau - 1\{y < \theta^\top f(x)\} \bigr). \]
Given an iid sample $(X_i, Y_i)$ from $P$, with $i=1,\ldots,n$, and a prior distribution $\Pi$ for $\theta$, the Gibbs posterior for inference can be readily constructed via the formula \eqref{eq:gibbs}.  

As far as the Gibbs posterior's properties are concerned, the check loss function admits a $\theta$-derivative almost everywhere, given by
\[\dot{\ell}_\theta(x,y) = \Bigg\{\begin{matrix} (1-\tau) f(x), & \text{ if }y < \theta^\top f(x)\\
-\tau f(x), & \text{ if }y > \theta^\top f(x)
\end{matrix};\]
and twice-differentiable risk function with second derivative
\[\ddot{R}(\theta) = \int f(x)f(x)^\top p_x(\theta^\top f(x))P(dx)\]
where $p_x(y)$ denotes the conditional density of $Y$, given $X=x$.  Therefore, according to the theory presented in Section~\ref{SS:bvm}, this implies the Gibbs posterior distribution has a root-$n$ concentration rate at $\theta^\star$ and is asymptotically normal. 

For a concrete example, let $Y_i$ given $X_i = x_i$ be normally distributed with standard deviation $2$ and with median ($\tau = 0.5$ quantile) equal to $\theta_0 + \theta_1 x_i$ and where $X_i + 2 \sim \chisq(4)$, independent, for $i=1,\ldots,n$, with $\theta^\star = (2,1)$.  For this illustration, we take a flat prior for $\theta$.  In this case, $\Sigma_{\theta^\star} \approx 1.25 V_{\theta^\star}^{-1}$, which implies the Gibbs posterior can be asymptotically calibrated by taking $\eta^\star = 1.25^{-1/3} \approx 0.93$.  The goal here is to investigate the performance of the GPC algorithm, to see if it can effectively calibrate the Gibbs posterior.  To check this, we simulated 400 data sets of size $n=50$ from the aforementioned joint distribution and, for each data set, ran the GPC algorithm and then extracted the corresponding 95\% Gibbs posterior credible region for $\theta$.  In these simulations, the marginal coverage for $\theta_0$ and $\theta_1$ was 92.5\% and 94.5\%, respectively, both within a tolerable range of the target 95\%.  Here, the average learning rate selected by the GPC algorithm was 0.99, with standard deviation 0.19, which is in a neighborhood the asymptotic oracle value $\eta^\star \approx 0.93$.  Similar results were obtained with $n=400$.

%In 400 simulations of this example with sample size $n=50$ we observed the GPC algorithm targeting $95\%$ coverage of $\theta_1$ effectively calibrated the Gibbs posterior; $95\%$ credible intervals for $\theta_0$ covered about $92.5\%$ of the time while $95\%$ intervals for $\theta_1$ covered about $94.5\%$ of the time.  The average learning rate selected by GPC was $0.99$ with a standard deviation of $0.19$.  We obtained very similar results when repeating the simulation for $n=400$.   

If the GPC-calibrated Gibbs posterior produces reliable inferences for $\theta$ then one would expect it behaves similarly to other reasonable methods, e.g., the bootstrap.  For instance, $95\%$ Gibbs posterior joint credible sets for $\theta$ should be close to $95\%$ confidence sets based on the bootstrapped M-estimator.  For a specific comparison, consider the Gibbs posterior distribution of the elliptical transform $g(\theta) = (\theta - \bar \theta)^\top \Psi^{-1}(\theta - \bar \theta)$, where $\bar \theta$ and $\Psi$ are the mean and covariance matrix obtained from the Gibbs posterior sample, respectively.  The set $\{ \vartheta: g(\vartheta) \leq z_{0.95}^\text{\sc e}\}$, 
%$\{\vartheta:(\vartheta - \bar \theta)^\top \Psi^{-1}(\vartheta - \bar \theta)\leq q^e_{0.95}\}$ 
where $z_{0.95}^\text{\sc e}$ is the $0.95$ quantile of the marginal Gibbs posterior distribution of $g(\theta)$, defines a $95\%$ elliptical credible region for $\theta$.  This is the same shape as the asymptotic credible region, but its justification does not depend on any asymptotic result.  We also consider the highest posterior density credible region, defined by the set of posterior samples $\{\vartheta: e^{-\omega R_n(\vartheta)} \leq z_{0.95}^\text{\sc h}\}$, where $z_{0.95}^\text{\sc h}$ is the $0.95$ quantile of the marginal Gibbs posterior distribution of $e^{-\omega R_n(\theta)}$.  This region need not be elliptical, but we would expect it to be roughly elliptically shaped, at least for large $n$.  Figure~\ref{fig:qr} compares these two Gibbs posterior credible regions to elliptical $95\%$ confidence regions based on the bootstrapped M-estimator for sample sizes of $n \in \{50,400\}$.  All three regions are similar in shape and orientation, but the bootstrap-based region is slightly larger in both instances, and the highest posterior density region indeed looks more elliptical at the larger sample size.

\begin{figure}
\centering
\subfigure[$n=50$, $\hat\eta \approx 0.96$]{\scalebox{0.4}{\includegraphics{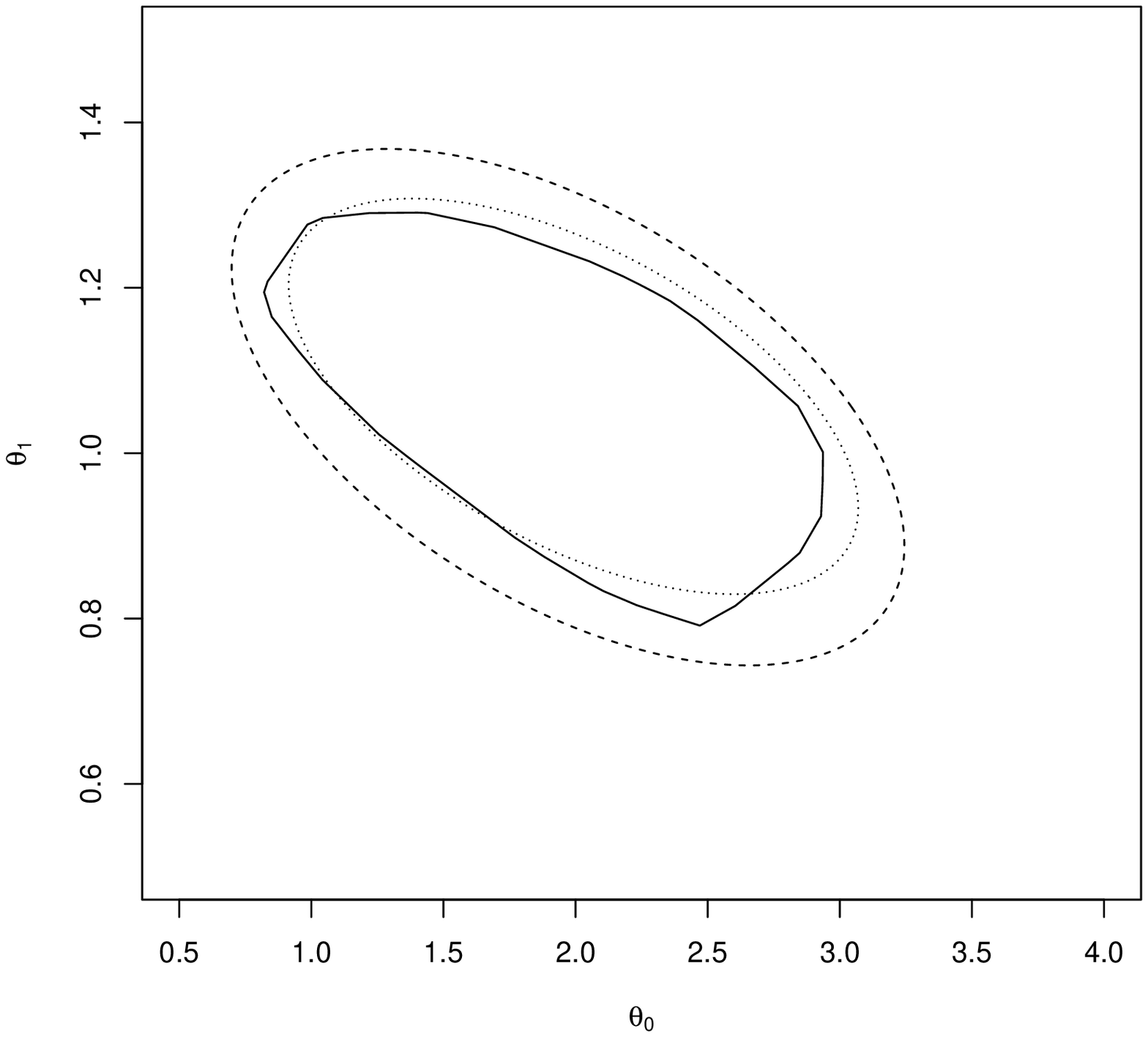}}}
\subfigure[$n=400$, $\hat\eta \approx 0.90$]{\scalebox{0.4}{
\includegraphics{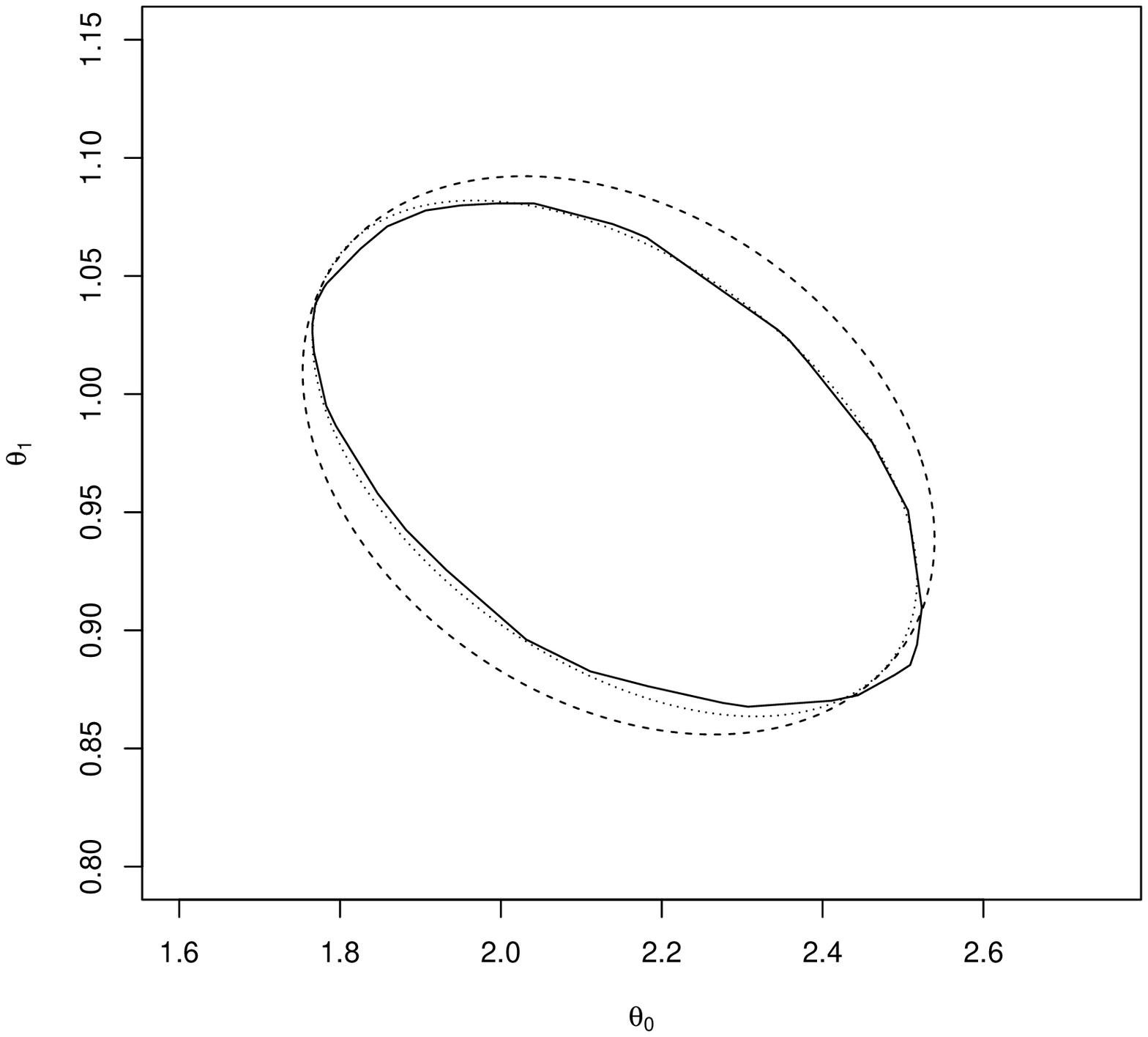}}}
\caption{95\% highest posterior density level set of the Gibbs posterior calibrated by the GPC algorithm (solid), $95\%$ elliptical credible region (dotted), and $95\%$ elliptical confidence region based on the bootstrapped M-estimator (dashed) for the quantile regression example in Section~\ref{ss:qr}.}
\label{fig:qr}
\end{figure}

\subsection{Classification}
\label{ss:svm}

Consider response-predictor data pair $(X,Y) \sim P$, where $X\in\RR^r$ is a continuous predictor and $Y\in\{-1,+1\}$ is a binary response---a class or label.  The classification boils down to learning the relationship between $X$ and $Y$, i.e., what values of $X$ tend to correspond to $Y=+1$ and vice versa.  This is typically carried out through specification of a {\em classifier}, a function that maps the $X$-space to $\{-1,+1\}$, often depending on a linear combination $\theta^\top f(x)$, where $\theta$ and $f(x)$ are as in Section~\ref{ss:qr}.  The unknown $\theta$ is linked to the data $(X,Y)$ through a choice of loss function.  A common choice is the 0--1 loss, 
\[ \ell_\theta(x,y) = 1 - y \, \text{sign}\{\theta^\top f(x)\}. \]
An advantage to this is interpretation: the expected loss is 
\[ R(\theta) = P[Y \neq \text{sign}\{\theta^\top f(X)\} ], \]
so the risk minimizer, $\theta^\star$, determines the classifier with smallest missclassification probability.  A disadvantage is that the discontinuity makes optimization of the empirical risk a challenging computational problem.  To remedy this, smooth versions of the 0--1 loss can be considered.  For example, the so-called {\em hinge loss} is given 
\[ \ell_\theta(x,y) = \max\{ 0, 1 - y \cdot \theta^\top f(x) \}. \]
This loss function is continuous and leads to a more manageable computational problem, which at least partially explains the popularity of support vector machines and maximum-margin classifiers.  In any case, once we have iid data $(X_i,Y_i)$ from $P$, have chosen a loss function, and specified a prior, the Gibbs posterior distribution for $\theta$ obtains as in \eqref{eq:gibbs}.  In what follows, we will focus on the hinge loss.  

As in Section~\ref{ss:qr}, we can ask what properties the the Gibbs posterior affords.  The hinge loss is continuous and almost everywhere $\theta$-differentiable, with derivative 
\[\dot{\ell}_\theta(x,y) = \begin{cases} -y f(x) & \text{if $1-y \cdot \theta^\top f(x) > 0$} \\ 0 & \text{otherwise}. \end{cases} \]
Moreover, the risk function is given by 
\begin{align*}
R(\theta) & = \int_{\{x:1-\theta^\top f(x)>0\}} \{1-\theta^\top f(x)\} \, m(x) \, P(dx) \\
& \qquad + \int_{\{x:1+\theta^\top f(x)>0\}} \{1+\theta^\top f(x)\} \, \{1-m(x)\} \, P(dx), 
\end{align*}
where $m(x) = P(Y = +1 \mid X=x)$ is the conditional probability function determined by $P$.  $R(\theta)$ admits two $\theta-$derivatives, which may be computed by the Leibniz integral rule, but we omit the (complicated) form of this function.  The point is that the existence of these derivatives implies both the M-estimator and Gibbs posterior for $\theta$ corresponding to the hinge loss are asymptotically normally distributed.

%There are several ways to parameterize a classifier---a function that takes an $x$ input and predicts its class.  For simplicity, consider linear classifiers $\theta^\top x$ for $\theta \in \mathbb{R}^p$.  There are also many ways to select an ``optimal" classifier; for example, the MLE/ERM for the log-loss $\ell_\theta(x,y) = [p(x)]^{(1+y)/2}[1-p(x)]^{(1+y)/2}$ given a model for $p(x):=P(Y=1|X=x)$; the ERM for the 0-1 loss $\ell_\theta(x,y) = 1-y\cdot\text{sign}(\theta^\top x)$; and the ERM for the hinge loss $\ell_\theta(x,y) = \max(0, 1-y \cdot\theta^\top  x)$.  

\ifthenelse{1=1}{}{
Then, three common classifiers are:
\begin{itemize}
    \item Parametric classifiers.  Let $p(x):=P(Y=1|X=x)$ and assume this is a function of $\theta^\top x$; for example, choose $p(x) = \exp(\theta^\top x)[1+\exp(\theta^\top x)]^{-1}$, which corresponds to logistic regression.  The estimated classifier equals $\hat\theta^\top  x$ where $\hat\theta$ is the MLE, the empirical risk minimizer of the log loss $\ell_\theta(x,y) = [p(x)]^{(1+y)/2}[1-p(x)]^{(1+y)/2}$.
    \item 0-1 classifiers.  These are non-parametric versions of the above strategy.  Let $\ell_\theta(x,y) = 1-y\cdot\text{sign}(\theta^\top x)$.  The corresponding risk is minimized at the value $\theta$ solving $p(x) = 1/2$.  
    \item Support-vector machines (SVM).  The SVM chooses the linear function $\theta^\top x$ best separating $x$ values of different classes by minimizing the penalized hinge loss $\ell_\theta(x,y) = \max(0, 1-y \cdot\theta^\top  x) + \nu^{-1}\|\theta\|_1$ where $\nu>0$ is a tuning parameter.    
\end{itemize}}

For a concrete example of a Gibbs posterior for classification with the hinge loss let $m(x) = F({\theta^\star}^\top  x)$, where $F$ denotes the distribution function of a Student's $t$ random variable with $3$ degrees of freedom, $\theta^\star = (1,-1)^\top$, and $X\sim \nm(1,1)$.  We investigate the behavior of the Gibbs posterior calibrated by the GPC algorithm targeting $95\%$ credible intervals for $\theta_1^\star$ in a short simulation of 400 replications for sample sizes $n=400$ and $1000$.  At sample size $400$ the average learning rate selected by GPC was about $0.77$ with standard deviation $0.05$.  For $n=1000$ the learning rate selected by GPC was a bit smaller, averaging $0.65$.  For both simulations the GPC-calibrated $95\%$ Gibbs posterior credible intervals were conservative, with coverage of about $99\%$ for $\theta_1$.  

Figure~\ref{fig:svm} displays the same three types of joint credible/confidence regions for $\theta^\star$, the hinge risk minimizer, as in Figure~\ref{fig:qr}.  For both moderate and large sample sizes the Gibbs highest posterior density credible region is very similar to the elliptical region, and , hence, similar to the shape of the asymptotic credible region.  Both Gibbs posterior credible regions contain the bootstrap-based confidence region, which reflects the over-coverage observed in the simulation experiment. 

\begin{figure}
\centering
\subfigure[$n=400$, $\hat\eta \approx 0.70$]{\scalebox{0.4}{\includegraphics{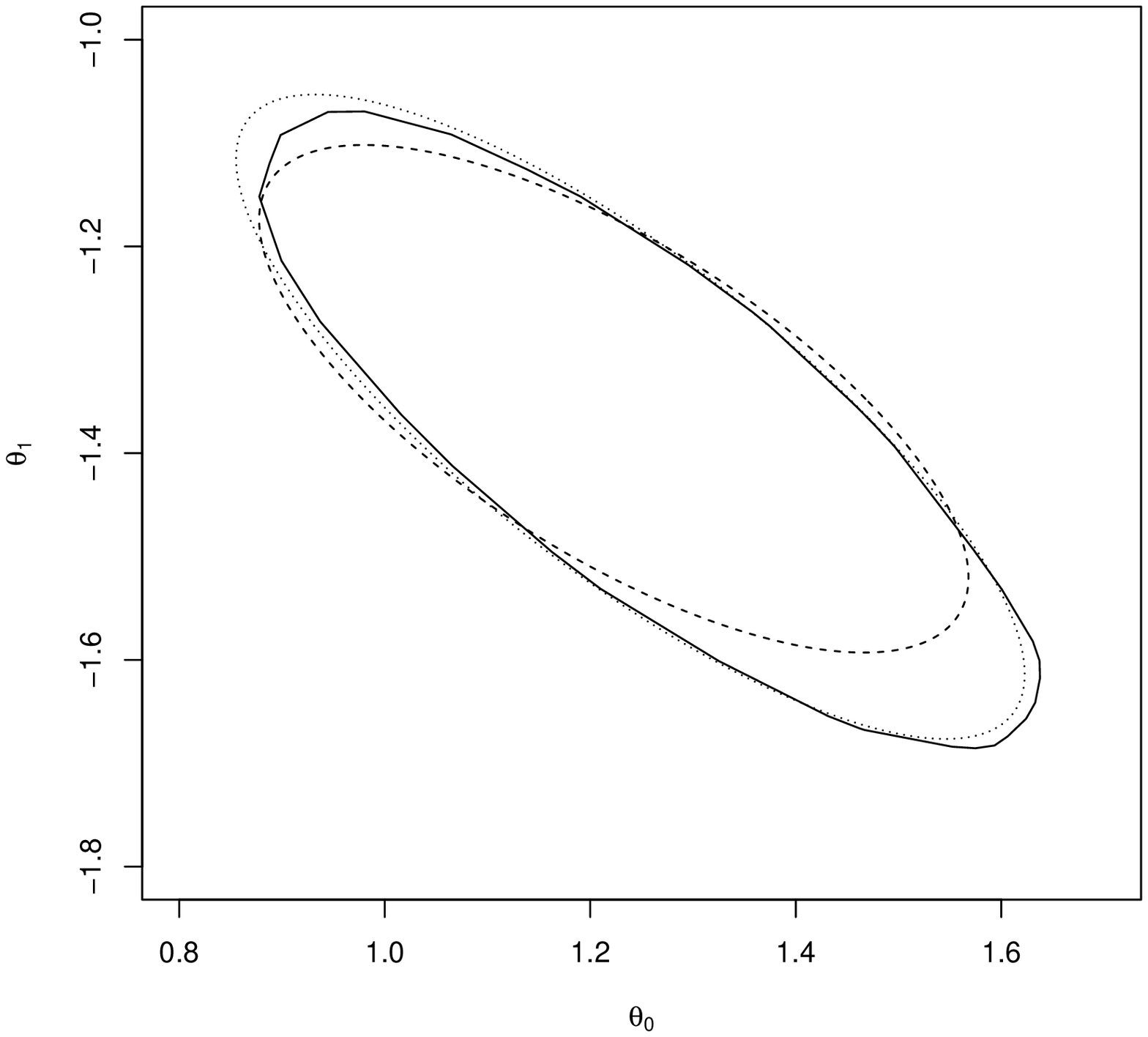}}}
\subfigure[$n=1000$, $\hat\eta \approx 0.58$]{\scalebox{0.4}{
\includegraphics{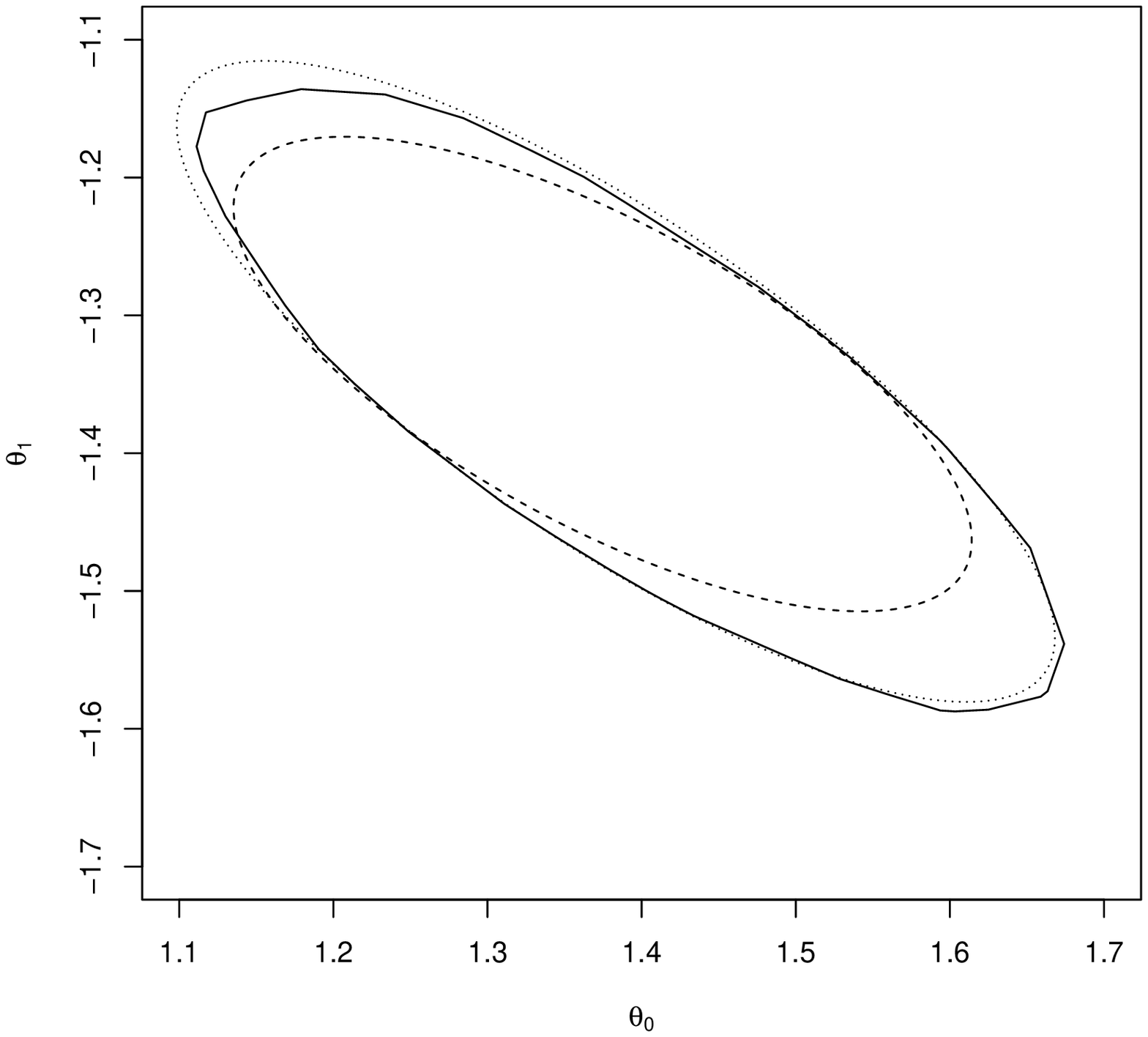}}}
\caption{95\% highest posterior density level set of the Gibbs posterior calibrated by the GPC algorithm (solid), $95\%$ elliptical credible region (dotted), and $95\%$ elliptical confidence region based on the bootstrapped M-estimator (dashed) for the classification regression example in Section~\ref{ss:svm}.} 
\label{fig:svm}
\end{figure}

\subsection{Non-linear regression}
\label{ss:NPreg}

So far, all our examples have considered finite-dimensional inference problems, but Gibbs posteriors can be used for inference on high- or even infinite-dimensional parameters as well.  Mean regression is a common application of high-dimensional inference and one setting in which Gibbs posteriors have already been studied; see, for example, \citet{gibbs.general}.  Let $(X,Y) \sim P$ and consider the loss function $\ell_\theta(x,y) = \{y - \theta(x)\}^2$, for $\theta$ a generic smooth function.  If $P$ admits a finite second moment, then it is easy to show that the risk minimizer, $\theta^\star$, exists; if the function class is sufficiently broad, then $\theta^\star(x)$ equals the conditional expectation of $Y$, given $X=x$, under $P$.  In any case, with iid data $(X_i, Y_i)$ and a suitable prior $\Pi$ on $\theta$, we can construct a Gibbs posterior distribution for inference on the risk minimizer.  

For smooth functions, the so-called random series priors \citep{shen.ghosal.2015} are quite convenient.  Parametrize $\theta$ by a linear combination of a chosen set of basis functions $\theta(x) = \phi^\top f(x)$ where $\phi = (\phi_1, \phi_2, \ldots, \phi_J)\in\RR^J$, and $f(x) = (f_1(x), f_2(x), \ldots, f_J(x))^\top$ denotes the first $J$ basis functions from, say, a Fourier, spline, or polynomial basis.  Then, a prior on $\theta$ is induced by a hierarchical prior on $J$ and on $\phi$, given $J=j$.  Common choices include a Poisson prior on $J$ and normal conditional priors on the coefficients in $\phi$.  Having a prior distribution on the number of basis functions makes the model flexible and adaptive to functions $\theta$ of different levels of smoothness.  In practice, the posterior may perform well for a fixed $J$ and a prior on the coefficient vector only.      

A useful feature of a Gibbs posterior for $\theta$ is a $100(1-\alpha)\%$ uniform credible band, a sup-norm ball of functions $\theta$ having $1-\alpha$ Gibbs posterior probability.  This can be used as a confidence band, i.e., as a set of functions with coverage probability $1-\alpha$, but this calibration would not be automatic.  Fortunately, the GPC algorithm can used to calibrate these posterior credible regions, even in this infinite-dimensional setting.

For an example of applying the Gibbs posterior along with GPC to non-linear regression, consider $X \sim \unif(0,1)$ and $(Y \mid X=x) \sim \nm(\theta^\star(x),0.2^2)$ where $\theta^\star(x) = 20x^3 - 34x^2 + 15.2 x - 1.2$.  We used independent, diffuse normal priors for the coefficients of a cubic polynomial basis, so the number of basis functions is fixed at $J=4$.  Figure~\ref{fig:GP} displays $95\%$ Gibbs posterior uniform credible bands for $\theta$, based on $n=100$, and using the GPC-selected learning rate.  In this case, the credible band contains $\theta^\star$. 

\begin{figure}[t]
\centering
\scalebox{0.5}{\includegraphics{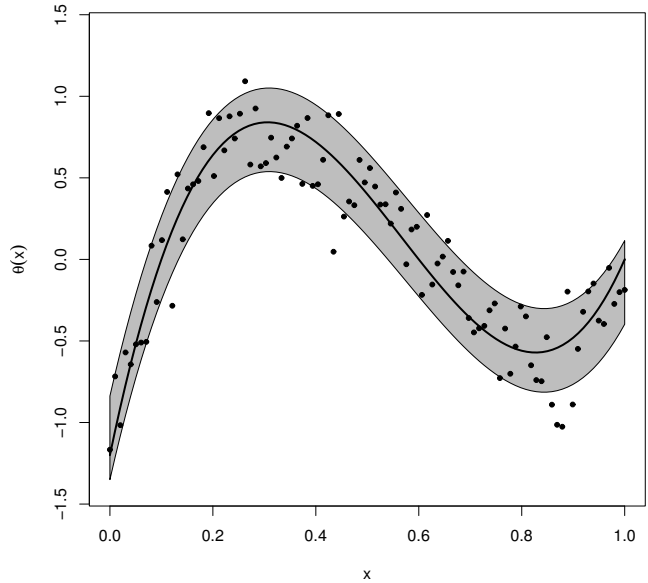}}
%\subfigure[$n = 100$, $\hat\eta \approx 4$]{\scalebox{0.4}{\includegraphics{NPreg_n100_omeg4pt0_unif.eps}}}
%\subfigure[$n = 400$, $\hat\eta \approx 24$]{\scalebox{0.4}{\includegraphics{NPreg_n400_omeg24pt0_unif.eps}}}
\caption{95\% uniform Gibbs posterior credible bands with learning rate selected by GPC for the non-linear regression example in Section~\ref{ss:NPreg}.} 
\label{fig:GP}
\end{figure}

\section{Further details}
\label{S:more}

\subsection{Things we did not discuss}
\label{SS:others}

Our coverage of the relevant results in the Gibbs or generalized Bayes posterior inference was necessarily limited.  So there are lots of interesting and important ideas and results that we did not discuss.  For the sake of being semi-complete in our survey of the relevant literature, here we briefly mention a few of these other problems and directions.  

\begin{itemize}
\item Gibbs is not the only alternative to Bayes.  Recall that Gibbs is closely related to M-estimation and empirical risk minimization, where the parameter of interest may be defined as the minimizer of an expectation/risk function and point estimates are derived by minimizing the empirical analog of the risk.  A closely related technique is Z-estimation defined by moment conditions in which one or more moments/expectations are exactly zero as functions of $\theta$ at $\theta = \theta^\star$.  \citet{chernozhukov.hong.2003}  develop a quasi-posterior distribution for models based on moment conditions taking similar form to a Gibbs posterior with an empirical risk function that is defined as a quadratic form based on the moment conditions.  An alternative approach for handling models defined by moment conditions is the exponentially-tilted empirical likelihood (ETEL) posterior; see, for example, \citet{chib.etal.etel}.  That approach utilizes an empirical likelihood in place of the usual parametric likelihood, restricted to distributions satisfying the moment conditions, and combined with a non-parametric prior favoring the empirical likelihood.  A variation of this approach is the penalized ETEL appearing in, for example, \citet{petel}, in the context of quantile regression.  The PETEL posterior combines the ETEL approach with a penalty term taking the form of the relevant empirical risk function that forms the basis of the Gibbs posterior.   
%{\color{red} Work by (David?) Frazier et al...?} 

\item An advantage of the Bayesian framework is that once the posterior is in hand, answers to any relevant question can be derived from it.  One of these relevant questions concerns prediction of future observations.  When the model is incorrectly- or under-specified, this advantage disappears.  In particular, the standard/naive construction of a predictive distribution need not have good properties even if the posterior does.  \citet{wu.martin.gprc} considered the prediction problem, proposed a generalized predictive distribution, and developed a variation on the GPC algorithm described above that would ensure the prediction intervals derived from it would be calibrated in a frequentist sense. 

\item When the loss function is rough, the empirical risk function $R_n$ tends to be rough too.  From the empirical risk minimization point of view, this roughness can make optimization a challenge.  One option is to suitably smooth the rough objective function so that the optimization problem remains (largely) unchanged.  While smoothing may not significantly affect the estimation problem, it can create challenges with inference.  From a Gibbsian perspective, the rough empirical risk can create problems for designing an efficient posterior sampling algorithm, so here too some smoothing might be desirable.  However, the type of smoothing that leads to simpler optimization may not lead to efficient posterior sampling, so perhaps some different considerations are required.  Alternatively, one might consider a {\em variational approximation} to the Gibbs posterior based on the rough empirical risk function; see, e.g., \citet{alquier.etal.2016} and \citet{alquier.ridgway.2020}. This would have simple posterior computations by construction, but might not be as accurate of an approximation of the original Gibbs posterior compared to one that directly and appropriately smooths the empirical risk.  
\end{itemize}

\subsection{Open problems}

Generalized Bayesian inference has been an active area of research in recent years.  While lots of exciting new developments have been made, there are still a number of interesting and important questions that remain unanswered.  We take the opportunity here to list just a few open problems.  This is just a biased selection, far from an exhaustive list.  

\begin{itemize}
\item The GPC algorithm described in Section~\ref{S:learning.rate} above has been shown to have strong empirical performance in a fairly wide range of applications.  However, there is still no rigorous theoretical support to back this up.  The challenge is that there a lot of moving parts: posterior computations via Monte Carlo, bootstrap, and stochastic approximation.  All three of these methods individually are theoretically sound, but GPC applies them simultaneously, which markedly complicates the analysis.  

\item Here the learning rate appeared as a power in the pseudo-likelihood expression.  However, there may be other ways in which a ``learning rate'' parameter might appear in a generalized posterior construction.  For example, composite likelihoods and the corresponding posterior distributions \citep{pauli.etal.2011} often involve at least one adjustment factor that could be tuned via GPC.  Similarly, variational approximations are known to under-estimate the posterior spread \citep{blei.etal.vb.2017}, so one could introduce an additional adjustment factor that, again, can be tuned using GPC.  So we see the idea behind the GPC algorithm as a general strategy that can be applied beyond the Gibbs posterior applications discussed here.

\item To our knowledge, learning rate adjustment via the GPC algorithm has only been investigated in relatively low-dimensional problems.  Our expectation is that GPC's performance will deteriorate as the dimension of $\theta$ increases, but it is currently unknown how quickly this deterioration would happen.  Can the algorithm be modified to scale more efficiently with dimension, or is an entirely new algorithm needed? 

\item As we argued above, the introduction of the scalar learning rate was a simple consequence of the loss function's scale relative to the prior being ambiguous.  But having only a scalar learning rate to tune limits our ability to adjust the shape of the Gibbs posterior.  Other kinds of adjustments are possible, e.g., the sandwich likelihood in \citet{sriram2015}.  What about other more general ``learning rate structures'' that would allow for simultaneous adjustment of both the shape and spread of the Gibbs posterior contours? 

\item The theoretical results presented here, as well as those in \citet{gibbs.general}, \citet{grunwald.mehta.rates}, and elsewhere, focus exclusively on estimation-related question such as consistency and concentration rates.  When $\theta$ is high-dimensional, it is common for there to be an underlying low-dimensional structure that is of practical relevance, so there is a question of whether the Gibbs posterior would be able to learn that structure.  For example, consider a classification problem that involves a high-dimensional feature $x$.  In such cases, one might be willing to believe that only a relatively small fraction of all the features should affect the classifier, so a sparsity-encouraging prior might be used.  But which features are important or active in the optimal classifier is unknown, and a natural question is if the Gibbs posterior can identify these.  This specific question was addressed in \citet{jiang.tanner.2008}, and another similar result in a different context we presented recently in \citet{wang.martin.levy}.  To our knowledge, however, there have been no general investigations into Gibbs posterior structure learning.  

\item We have focused exclusively here on cases where the data $T^n = (T_1,\ldots,T_n)$ are iid from a common distribution $P$.  From here, an extension to a case where data $T_i$ are independent but not iid, having distinct distributions $P_i$ for $i=1,\ldots,n$, would not be out of reach.  The case, however, where the $T_i$'s are {\em dependent} has, to our knowledge, not been given much attention in the literature.  That the exponent in the definition of $\pi_n^{(\eta)}$ in, say, \eqref{eq:gibbs.density} involves a sum of individual negative loss terms seems uniquely suited for independent data, so all of what has been presented here would need to be reworked.  

%\item Finally, on a more technical point, the existing literature provides two general strategies for the asymptotic analysis of Gibbs posterior distributions: the one presented here, which is largely empirical process-based, and another, e.g., in \citet{grunwald.mehta.rates} and \citet{gibbs.general}, which is based on moment-generating function-like bounds.  The advantage of the former is that it does not require any restrictions on the range of $\eta$, while the latter does.  The advantage of the latter is that it is often relatively straightforward to check, while the former is based largely on uniform laws of large numbers which, if the application not contained in the standard set of examples, may be nontrivial to obtain.  So we think a thorough investigation to compare these two approaches, to weigh the pros and cons, would be a valuable contribution.  
\end{itemize}

\section{Conclusion}
\label{S:discuss}

This paper considered the problem in which the quantity of interest is defined, not as a parameter in a statistical model for the data-generating process, but as something that is, or at least can be, expressed as a minimizer of a suitable expected loss function.  It is often the case that quantities having a real-world interpretation, beyond a statistical model, can be expressed in this way, e.g., quantiles and moments.  More generally, the quantities of interest in machine learning applications can often be expressed as risk minimizers.  Of course, regardless of how the quantity of interest is most naturally defined, it would be possible to formulate a statistical model, recast the quantity of interest in terms of the model parameters, and proceed with Bayesian inference as usual.  Here we argued, first, that there are good reasons for not going this indirect route through a statistical model:
\begin{itemize}
\item no risk of model misspecification bias;
\vspace{-2mm}
\item no need to over-complicate matters by introducing nonparametric models;
\vspace{-2mm}
\item no need to deal with prior specification and posterior computations pertaining to nuisance parameters and the associated challenges with marginalization \citep{fraser2011, martin.nonadditive}.
\end{itemize} 
Second, we argued that a direct approach can be carried out using a Gibbs posterior.  Interpretation of the Gibbs posterior is different from that of a Bayesian posterior, but it can still be used for making inference, and it shares many of the familiar asymptotic convergence properties of the Bayes posterior.  A key point is that the Gibbs posterior is not automatically calibrated in a frequentist sense.  Calibration only holds for Bayesian posteriors in correctly specified models, but since ``All models are wrong...''~this Bayesian result does not provide much comfort.  We argued that calibration can be achieved, just not automatically---we need to carefully tune the learning rate parameter.  With the Gibbs posterior's desirable convergence properties, together with a suitable, data-driven learning rate selection procedure, this appears to us to be a powerful framework, fundamental to what could be described as Bayesian statistical learning.  

This, of course, was a biased survey of recent developments falling under the umbrella of generalized Bayesian inference.  We briefly mentioned a few other ideas and approaches in Section~\ref{SS:others} but that definitely does not do these developments justice.  While there are some technical differences between the approach advocated for here and those advocated by others, we want to end this discussion by highlight what they all have in common.  Wasserman's quote from Section~\ref{S:intro} is right: Bayesian inference is too rigid in its insistence on being able to answer all relevant questions about the data-generating process in one stroke.  But let's not throw out the baby with the bath water.  That is, there is no need to abandon hope of principled, probabilistic inference on interest parameters to get the flexibility Wasserman is looking for.  We just need to be more strategic/direct with our posterior construction.  The developments here and elsewhere in the generalized Bayes literature are in this vein, and we are excited to see where this goes.

\section*{Acknowledgments}

This work is partially supported by the U.~S.~National Science Foundation, grant number SES--2051225.  The authors also thank the editors of this {\em Handbook}, Alastair Young in particular, for the invitation to make a contribution.

\appendix

\section{Proofs}
\label{S:proofs}

\subsection{Proof of Theorem~\ref{thm:cons}}
\label{proof:thm1}

The proofs of both Theorems~\ref{thm:cons} and \ref{thm:conc} share a similar strategy.  Start by redefining the Gibbs posterior distribution as the ratio
\[ \Pi_n^{(\eta)}(A) = \frac{N_n^{(\eta)}(A)}{D_n^{(\eta)}}, \quad A \subseteq \Theta, \]
where the numerator and denominator, respectively, are 
\begin{align*}
N_n^{(\eta)}(A) & = \int_A e^{-\eta\{R_n(\theta) - R_n(\theta^\star)\}} \, \Pi(d\theta) \\
D_n^{(\eta)} & = \int_\Theta e^{-\eta\{R_n(\theta) - R_n(\theta^\star)\}} \, \Pi(d\theta). 
\end{align*}
Gibbs posterior consistency requires $\Pi_n^{(\eta)}(A_\eps) \to 0$, for $A_\eps = \{\theta: d(\theta, \theta^\star) > \eps\}$, for any $\eps > 0$.  We proceed by showing that (a)~the numerator is vanishing and (b)~the denominator is not any smaller than the bound on the numerator, both as $n \to \infty$.  We start with Lemma~\ref{lem:den_cons} below that bounds the denominator.  

%The behavior of the denominator term $D_n$ is largely controlled by the prior $\Pi$.  In fact, as shown by Lemma~\ref{lem:den_cons}, if $\Pi$ has positive mass in a neighborhood of $\theta^\star$, then the denominator is at least $e^{-cn}$ for some $c>0$ with probability converging to 1.  

\begin{lem}
\label{lem:den_cons}
If \eqref{eq:prior_cond} holds, then $P\{D_n^{(\eta)} > e^{-n \eta \delta}\} \to 1$ as $n \to \infty$ for any $\delta > 0$.
\end{lem}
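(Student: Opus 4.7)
The strategy is to lower-bound $D_n^{(\eta)}$ by restricting the integral to a small ``risk-neighborhood'' of $\theta^\star$. On such a neighborhood, the empirical risk difference in the exponent can be controlled uniformly through the uniform law of large numbers \eqref{eq:num_cons_a}, while the prior condition \eqref{eq:prior_cond} guarantees that this neighborhood has positive prior mass. Combining these two ingredients should produce, on an event of probability tending to $1$, a bound of the form $D_n^{(\eta)} \geq p \, e^{-3\eta n \eps}$ for a constant $p = \Pi(B) > 0$, which can be arranged to exceed $e^{-\eta n \delta}$ by choosing $\eps$ strictly less than $\delta/3$.

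Concretely, I would fix $\delta > 0$, pick any $\eps \in (0, \delta/3)$, and set $B = \{\theta : R(\theta) - R(\theta^\star) < \eps\}$. Condition \eqref{eq:prior_cond} then gives $p := \Pi(B) > 0$, and \eqref{eq:num_cons_a} gives $P(E_n) \to 1$ for the event $E_n = \{\sup_\theta |R_n(\theta) - R(\theta)| < \eps\}$. A triangle inequality on $E_n$ yields, for every $\theta \in B$,
\[ R_n(\theta) - R_n(\theta^\star) \leq 2 \sup_\theta |R_n(\theta) - R(\theta)| + [R(\theta) - R(\theta^\star)] < 3\eps, \]
so that restricting the integration in $D_n^{(\eta)}$ from $\Theta$ down to $B$ immediately delivers $D_n^{(\eta)} \geq p \, e^{-3\eta n \eps}$ on the event $E_n$. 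Since $\delta - 3\eps > 0$, for all $n$ large enough one has $p \geq e^{-\eta n(\delta - 3\eps)}$, whence $D_n^{(\eta)} > e^{-\eta n \delta}$ on $E_n$. The conclusion $P\{D_n^{(\eta)} > e^{-\eta n \delta}\} \to 1$ is then inherited directly from $P(E_n) \to 1$.

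There is no real obstacle here --- the argument is a clean marriage of the uniform law of large numbers with the prior-mass hypothesis, in the flavor of Schwartz's classical posterior consistency result. The only subtlety worth flagging is that the prior-mass factor $p$ is $n$-independent and must be absorbed into the exponential rate in the lower bound; this is why $\eps$ must be chosen strictly smaller than $\delta/3$ (or indeed any fixed fraction of $\delta$), so that the resulting slack $\delta - 3\eps > 0$ in the exponent is able to swallow $\log(1/p)$ once $n$ is large.
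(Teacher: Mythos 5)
Your argument is internally sound: the triangle-inequality bound $R_n(\theta)-R_n(\theta^\star) < 3\eps$ on $B\cap E_n$, the restriction of the integral to $B$, and the absorption of $\log(1/p)$ into the slack $\delta-3\eps$ all work exactly as you describe. The one real issue is that you have imported a hypothesis the lemma does not grant you. Lemma~\ref{lem:den_cons} is stated under \eqref{eq:prior_cond} \emph{alone}, whereas your event $E_n$ is built from the uniform law of large numbers \eqref{eq:num_cons_a}, which is a separate assumption belonging to the numerator bound. So what you have proved is ``\eqref{eq:prior_cond} and \eqref{eq:num_cons_a} imply the conclusion,'' which suffices for the application inside Theorem~\ref{thm:cons} (where both are assumed) but is strictly weaker than the lemma as stated.

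The paper's own proof avoids this by never appealing to uniformity: it restricts the integral to $\{\theta: R(\theta)-R(\theta^\star)\le\delta/2\}$, invokes only the \emph{pointwise} law of large numbers to get $\limsup_n\{R_n(\theta)-R_n(\theta^\star)\}\le\delta/2$ almost surely for each such $\theta$, and then applies Fatou's lemma together with \eqref{eq:prior_cond} to conclude $\liminf_n e^{n\eta\delta}D_n^{(\eta)}=\infty$ almost surely. That route is more economical in hypotheses (pointwise LLN is essentially free once the risks are finite) and delivers an almost-sure statement, at the cost of a Fatou/Fubini-style argument over the prior; your route gives an explicit finite-$n$ event with a clean quantitative lower bound $p\,e^{-3\eta n\eps}$, but only by borrowing \eqref{eq:num_cons_a}. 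To repair your proof without changing its structure, replace the uniform supremum over $\Theta$ by the pointwise convergence at each fixed $\theta\in B$ and at $\theta^\star$, and then pass the resulting pointwise eventual bound through Fatou's lemma as the paper does.
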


\begin{proof}
Begin by lower-bounding $e^{n\eta\delta}D_n^{(\eta)}$ by restricting the domain of integration in the definition of $D_n^{(\eta)}$:
\begin{align*}
e^{n\eta \delta}D_n^{(\eta)} %&= e^{n(\eta R(\theta^\star) + \delta)}\int e^{-\eta n R_n(\theta)}\Pi(d\theta)\\
    & = \int e^{-\eta n \{R_n(\theta)-R_n(\theta^\star) - \delta\}} \, \Pi(d\theta)\\
    & \geq \int_{\{\theta: R(\theta) - R(\theta^\star) \leq \delta / 2\}} e^{\eta n \{\delta - R_n(\theta)+R_n(\theta^\star)\}} \, \Pi(d\theta). %\\
    % & = \int_{\{\theta:\eta R(\theta) - \eta R(\theta^\star) \leq \delta / 2\eta\}} e^{-\eta n [R_n(\theta)-R(\theta)+R(\theta)-R(\theta^\star) - \delta / \eta]}\Pi(d\theta)\\
    % & \geq e^{n\delta / 2}\int_{\{\theta:\eta R(\theta) - \eta R(\theta^\star) \leq \delta / 2\eta\}} e^{-\eta n [R_n(\theta)-R(\theta)]}\Pi(d\theta)
\end{align*}
The law of large numbers implies that $R_n(\theta) - R_n(\theta^\star)$ converges $P$-almost surely to $R(\theta) - R(\theta^\star)$, pointwise in $\theta$.  So, for $\theta$ in the above range of integration, we get 
\[ \limsup_{n \to \infty} \{ R_n(\theta) - R_n(\theta^\star)\} \leq \tfrac{\delta}{2}, \quad \text{$P$-almost surely}. \]
Then the integrand in the lower bound for $e^{n\delta} D_n^{(\eta)}$ is converging to $\infty$ pointwise in $\theta$, $P$-almost surely.  Then Fatou's lemma and the condition \eqref{eq:prior_cond} on the prior mass assigned to risk neighborhoods of $\theta^\star$ together imply that 
\[ \liminf_{n \to \infty} e^{n\delta} D_n^{(\eta)} = \infty, \quad \text{$P$-almost surely}, \]
and, from this, (an even stronger version of) the claim follows.  
% Fatou's Lemma and pointwise convergence in $P-$probability of $R_n(\theta)\rightarrow R(\theta)$ imply
% \begin{align*}
%     \lim&\inf_{n\rightarrow\infty}\int_{\{\theta:\eta R(\theta) - \eta R(\theta^\star) \leq \delta / 2\eta\}} e^{-\eta n [R_n(\theta)-R(\theta)]}\Pi(d\theta)\\
%     &\geq \int_{\{\theta:\eta R(\theta) - \eta R(\theta^\star) \leq \delta / 2\eta\}} \lim\inf_{n\rightarrow\infty}e^{-\eta n [R_n(\theta)-R(\theta)]}\Pi(d\theta)\\
%     &\geq\int_{\{\theta:\eta R(\theta) - \eta R(\theta^\star) \leq \delta / 2\eta\}} e^{-\lim\sup_{n\rightarrow\infty}\eta n [R_n(\theta)-R(\theta)]}\Pi(d\theta)\\
%     & \geq e^{-n\eta \eps}\Pi\{\theta:\eta R(\theta) - \eta R(\theta^\star) \leq \delta / 2\eta\}
% \end{align*}
% for any $\eps>0$, with $P-$probability approaching $1$ as $n\rightarrow\infty$.  Therefore, since we assumed $\Pi\{\theta:\eta R(\theta) - \eta R(\theta^\star) \leq \delta / 2\eta\}$ is a positive constant, then it follows that 
% \[e^{n(\eta R(\theta^\star) + \delta)}D_n \gtrsim e^{n(\delta/2 - \eta\eps)}\rightarrow\infty, \]
% which verifies the claim.
\end{proof}

In contrast to the the denominator, which is controlled by properties of the prior, the behavior of the numerator is largely determined by properties of the loss function.  

\begin{lem}
\label{lem:num_cons}
If \eqref{eq:num_cons} holds, then $P\{N_n^{(\eta)}(A_\eps) \leq e^{-\eta n c}) \to 1$ as $n \to \infty$, for a constant $c > 0$ depending only on $\eps$. 
\end{lem}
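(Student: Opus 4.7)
The plan is to combine the two numbered conditions in \eqref{eq:num_cons} to produce a uniform lower bound on the exponent $\eta n\{R_n(\theta) - R_n(\theta^\star)\}$ over $\theta \in A_\eps$ on an event whose $P$-probability tends to 1; once such a uniform bound is in hand, the integrand in the definition of $N_n^{(\eta)}(A_\eps)$ can be pulled outside the integral and the result follows immediately using $\Pi(A_\eps) \leq 1$.

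First, I would invoke the separation condition \eqref{eq:num_cons_b} with $\delta = \eps$: this produces a strictly positive constant $c = c(\eps) > 0$ such that $R(\theta) - R(\theta^\star) \geq 2c$ for every $\theta \in A_\eps$. Next, by the triangle inequality applied twice,
\[
\bigl|\{R_n(\theta) - R_n(\theta^\star)\} - \{R(\theta) - R(\theta^\star)\}\bigr| \leq 2 \sup_{\theta \in \Theta} |R_n(\theta) - R(\theta)|,
\]
and the uniform law of large numbers \eqref{eq:num_cons_a} yields that the right-hand side is $o_P(1)$. Therefore the event
\[
E_n = \Bigl\{\sup_{\theta \in \Theta}|R_n(\theta) - R(\theta)| \leq c/2\Bigr\}
\]
has $P$-probability tending to 1. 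On $E_n$, and for any $\theta \in A_\eps$, combining the last two displays gives
\[
R_n(\theta) - R_n(\theta^\star) \geq \{R(\theta) - R(\theta^\star)\} - c \geq 2c - c = c.
\]

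Consequently, on $E_n$,
\[
N_n^{(\eta)}(A_\eps) = \int_{A_\eps} e^{-\eta n\{R_n(\theta) - R_n(\theta^\star)\}} \, \Pi(d\theta) \leq e^{-\eta n c} \, \Pi(A_\eps) \leq e^{-\eta n c},
\]
so $P\{N_n^{(\eta)}(A_\eps) \leq e^{-\eta n c}\} \geq P(E_n) \to 1$, proving the claim with $c = c(\eps)$ as above.

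There is no real obstacle here: the two sufficient conditions in \eqref{eq:num_cons} are exactly what is needed, with \eqref{eq:num_cons_b} providing the ``signal'' (a positive gap in the risk) and \eqref{eq:num_cons_a} providing the ``noise control'' (uniform closeness of the empirical risk to the risk). The only mild subtlety is ensuring that the constant $c$ does not depend on $\theta$ or on $n$; this is guaranteed because the separation bound from \eqref{eq:num_cons_b} is uniform over $A_\eps$ and the uniform LLN bound is uniform over all of $\Theta$. Combined with Lemma~\ref{lem:den_cons} (choosing any $\delta < c$ there) this will complete the proof of Theorem~\ref{thm:cons}.
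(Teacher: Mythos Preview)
Your proof is correct and follows essentially the same approach as the paper's. Both arguments decompose $R_n(\theta)-R_n(\theta^\star)$ into the true risk difference $R(\theta)-R(\theta^\star)$, lower-bounded via \eqref{eq:num_cons_b}, plus fluctuation terms controlled uniformly by \eqref{eq:num_cons_a}; your version is slightly more explicit in naming the high-probability event $E_n$ and fixing the constant $c$ up front, but the substance is identical.
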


\begin{proof}
The risk difference can clearly be rewritten as 
\[ R_n(\theta) - R_n(\theta^\star) = \{R_n(\theta) - R(\theta)\} + \{R(\theta) - R(\theta^\star)\} + \{R(\theta^\star) - R_n(\theta^\star)\}. \]
By \eqref{eq:num_cons_a}, the first term is $o_P(1)$ uniformly in $\theta$, by \eqref{eq:num_cons_b} the second term is lower-bounded, uniformly on $A_\eps$, by a constant $\xi = \xi(\eps) > 0$, and by the law of large numbers the third term is $o_P(1)$ and does not depend on $\theta$.  Therefore, 
\begin{align*}
N_n^{(\eta)}(A_\eps) & = \int_{A_\eps} e^{-\eta n\{R_n(\theta) - R_n(\theta^\star)\}} \, \Pi(d\theta) \\
& \leq \int_{A_\eps} e^{-\eta n \{o_P(1) + \xi\}} \, \Pi(d\theta).
\end{align*}
Since the $o_P(1)$ term vanishes uniformly in $\theta \in A_\eps$, the bracketed term in the exponent will eventually be bigger than, say, $\xi/2 > 0$.  Therefore, with $P$-probability converging to 1, we have that $N_n^{(\eta)}(A_\eps) \leq e^{-\eta n c}$ for some constant $c > 0$, as claimed.  
% {\color{magenta}
% Add and subtract risk function terms to write the empirical risk function as
% \[R_n(\theta) = R_n(\theta) - R(\theta) + R(\theta) - R(\theta^\star) + R(\theta^\star).\]
% Applying \eqref{eq:num_cons} we have
% \begin{align*}
%     R_n(\theta) &\geq -\sup_{\Theta}|R_n(\theta) - R(\theta)|+ R(\theta) - R(\theta^\star) + R(\theta^\star)\\
%     & \geq -\alpha_1 + R(\theta) - R(\theta^\star) + R(\theta^\star)\\
%     & \geq -\alpha_1 + \delta(\alpha_2) + R(\theta^\star)
% \end{align*}
% where the second and third inequalities hold with probability approaching 1 and where $\delta(\alpha_2)$ is a small constant depending on $\alpha_2$.  As a result, 
% \[e^{-\eta n R_n(\theta)} \leq e^{-\eta n (R(\theta^\star)-\alpha_1 + \delta(\alpha_2))}\]
% with probability approaching 1.  Since $\delta(\alpha_2)>0$ is a constant, but $\alpha_1>0$ can be taken arbitrarily small, we have $\delta(\alpha_2) - \alpha_1 =: c >0$ and the claim follows.
%}
\end{proof}

Lemmas~\ref{lem:den_cons} and \ref{lem:num_cons} together imply Gibbs posterior consistency. Indeed, on a set with $P$-probability converging to 1, we have that $N_n^{(\eta)}(A_\eps)$ is exponentially small and $D_n^{(\eta)}$ is not exponentially small.  Putting these two results together gives 
\[ \Pi_n^{(\eta)}(A_\eps) = \frac{N_n^{(\eta)}(A_\eps)}{D_n^{(\eta)}} \leq e^{-\eta n(c - \delta)}. \]
The constant $c > 0$ is fixed, depends on $\eps$, but $\delta > 0$ is arbitrary.  So if we take $\delta < c$, then we can conclude that, on a set with $P$-probability converging to 1, $\Pi_n^{(\eta)}(A_\eps) \to 0$, which proves consistency.

\subsection{Proof of Theorem~\ref{thm:conc}}
\label{proof:thm2}

Our strategy for proving concentration we present in this section mirrors our proof of consistency above.  First, we express the Gibbs posterior probability of the complement of a shrinking neighborhood $A_n = \{\theta:d(\theta,\theta^\star) > M_n \eps_n\}$ as the ratio $N^{(\eta)}(A_n)/D_n^{(\eta)}$.  Then we show that conditions \eqref{eq:conc_num} and \eqref{eq:conc_den} imply that the numerator is small and the denominator is not too small such that the ratio is vanishing.  These bounds on the denominator and numerator are established in the two lemmas presented next. 

%Then, given a suitable prior distribution, one bounded away from zero in shrinking neighborhoods of $\theta^\star$, we can bound $D_n^{(\eta)}$ from below similar to Lemma~\ref{lem:den_cons}.  And, provided the uniform law of large numbers in \eqref{eq:num_cons} can be strengthened to hold at a rate depending on $\eps_n$ and for all $\theta \in A_n$, we can prove a strengthened form of Lemma~\ref{lem:num_cons}. 

\begin{lem}
\label{lem:conc_den}
Define the mean and variance functions of the excess loss:
\begin{align*}
    & m(\theta, \theta^\star) = R(\theta) - R(\theta^\star), \,\text{and} \\
    & v(\theta, \theta^\star) = P(\ell_\theta-\ell_{\theta^\star})^2 - m(\theta,\theta^\star)^2.
\end{align*}
And, define the subset of the parameter space $\Theta_n:=\{\theta: m(\theta,\theta^\star)\vee v(\theta,\theta^\star)
\leq C \eps_n^\alpha\}$ for some $C>0$ and where $\eps_n$ and $(\alpha,\beta)$ are defined in Theorem~\ref{thm:conc} and \eqref{eq:conc_num}.  Then, 
\[ D_n^{(\eta)} \gtrsim \Pi(\Theta_n)e^{-2b_n\eta n\eps_n^\alpha}, \quad \text{with $P^n$-probability $\to 1$}, \]
for any positive sequence $b_n \to \infty$.
\end{lem}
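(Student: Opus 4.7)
The natural strategy is to mimic the lower bound argument from Lemma~\ref{lem:den_cons}, but quantitatively. Restrict the integration in $D_n^{(\eta)}$ to $\Theta_n$ and apply Jensen's inequality to the convex function $x \mapsto e^{-x}$ with respect to the renormalized prior $\Pi(\cdot \cap \Theta_n)/\Pi(\Theta_n)$, which yields
\[
D_n^{(\eta)} \;\geq\; \Pi(\Theta_n)\,\exp\!\Bigl\{-\eta n \int_{\Theta_n}\!\bigl[R_n(\theta)-R_n(\theta^\star)\bigr]\,\frac{\Pi(d\theta)}{\Pi(\Theta_n)}\Bigr\}.
\]
Consequently the claim reduces to showing that the $\Pi$-average excess empirical risk on $\Theta_n$ is at most $2b_n\eps_n^\alpha$ with $P^n$-probability tending to $1$.

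The plan is to analyze this average via Fubini. Writing $Z_i := \int_{\Theta_n}\{\ell_\theta(T_i)-\ell_{\theta^\star}(T_i)\}\,\Pi(d\theta)/\Pi(\Theta_n)$, the average of interest is $\bar Z_n = n^{-1}\sum_{i=1}^n Z_i$, a mean of iid random variables. Its expectation is $P^n \bar Z_n = \int_{\Theta_n} m(\theta,\theta^\star)\,\Pi(d\theta)/\Pi(\Theta_n) \leq C\eps_n^\alpha$ by the definition of $\Theta_n$. For its variance, Jensen's inequality applied to the inner integral gives
\[
\mathrm{Var}(Z_1) \;\leq\; P(Z_1^2) \;\leq\; \int_{\Theta_n}\!\bigl\{v(\theta,\theta^\star)+m(\theta,\theta^\star)^2\bigr\}\,\frac{\Pi(d\theta)}{\Pi(\Theta_n)} \;\lesssim\; \eps_n^\alpha,
\]
so $\mathrm{Var}(\bar Z_n) \lesssim \eps_n^\alpha/n$.

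With these two moment bounds in hand, apply Chebyshev's inequality with deviation level $b_n\eps_n^\alpha$: for any sequence $b_n \to \infty$,
\[
P^n\bigl\{\bar Z_n > C\eps_n^\alpha + b_n\eps_n^\alpha\bigr\} \;\lesssim\; \frac{\eps_n^\alpha/n}{b_n^2\eps_n^{2\alpha}} \;=\; \frac{1}{b_n^2\,n\eps_n^\alpha}.
\]
Under the rate condition $\alpha \geq 2\beta$ with $\eps_n = n^{-1/(2\alpha-2\beta)}$, one has $n\eps_n^\alpha = n^{(\alpha-2\beta)/(2\alpha-2\beta)} \geq 1$, so the right-hand side vanishes whenever $b_n\to\infty$. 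On this event, $\bar Z_n \leq 2b_n \eps_n^\alpha$ (for $n$ large enough that $b_n \geq C$), and plugging back into the Jensen bound gives exactly $D_n^{(\eta)} \gtrsim \Pi(\Theta_n)\,e^{-2b_n\eta n\eps_n^\alpha}$.

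The step requiring the most care is the variance calculation: one must use that $\Theta_n$ is a neighborhood controlled simultaneously in $m$ and in $v$, since a bound on $m$ alone would only give $P(\ell_\theta-\ell_{\theta^\star})^2 \lesssim \eps_n^{2\alpha}$ via $L_1$--$L_2$ arguments, which is insufficient. The twin control furnished by the definition of $\Theta_n$ is precisely what makes Chebyshev sharp enough; the empirical-process condition \eqref{eq:conc_num2} is not needed here (it will be used instead in the companion numerator lemma).
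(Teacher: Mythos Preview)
Your argument is correct. Both your proof and the paper's use the same two moment controls on $\Theta_n$---the bound $m\leq C\eps_n^\alpha$ and the bound $v\leq C\eps_n^\alpha$---together with Chebyshev and the fact that $n\eps_n^\alpha\geq 1$ under $\alpha\geq 2\beta$. But the route is genuinely different. The paper works \emph{pointwise in $\theta$}: it standardizes $nR_n(\theta)-nR_n(\theta^\star)$ to a variable $Z_n(\theta)$ with mean $0$ and variance $1$, defines the ``bad'' set $\Z_n(U^n)=\{\theta:|Z_n(\theta)|\geq (b_n n\eps_n^\alpha)^{1/2}\}$, and bounds $D_n^{(\eta)}$ below by $e^{-2b_n\eta n\eps_n^\alpha}\,\Pi\{\Theta_n\cap\Z_n(U^n)^c\}$. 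A Fubini/Markov step then shows $\Pi\{\Theta_n\cap\Z_n(U^n)\}\leq\tfrac12\Pi(\Theta_n)$ with $P^n$-probability tending to $1$, since $P^n\{\Z_n(\theta)\}\leq (b_n n\eps_n^\alpha)^{-1}$ by Chebyshev. Your approach instead applies Jensen \emph{first}, collapsing the integral over $\Theta_n$ to a single scalar $\bar Z_n$, which is an iid average; one Chebyshev bound then suffices, with no need to track a random subset of $\Theta_n$. Your route is a bit more streamlined; the paper's has the minor advantage of being closer in spirit to the standard Ghosal--Ghosh--van der Vaart denominator arguments, which may make it easier to port to settings where the prior-averaged quantity is less tractable. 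Your closing remark is also accurate: condition~\eqref{eq:conc_num2} plays no role here and is reserved for the numerator bound.
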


\begin{proof}
Define a standardized version of the empirical risk difference, i.e., 
\[ Z_n(\theta) = \frac{\{nR_n(\theta) - nR_n(\theta^\star)\} - n m(\theta, \theta^\star)}{\{n v(\theta, \theta^\star)\}^{1/2}}. \]  Of course, $Z_n(\theta)$ depends (implicitly) on the data $U^n$.  Let 
\[ \Z_n = \{(\theta, U^n): |Z_n(\theta)| \geq (b_n n \eps_n^\alpha)^{1/2} \}. \]
Next, define the cross-sections 
\[ \Z_n(\theta) = \{U^n: (\theta, U^n) \in \Z_n\} \quad \text{and} \quad \Z_n(U^n) = \{\theta: (\theta, U^n) \in \Z_n\}. \]
For $\Theta_n$ as above, since 
\[ nR_n(\theta) - nR_n(\theta^\star) = n m(\theta, \theta^\star) + \{n v(\theta, \theta^\star)\}^{1/2} Z_n(\theta), \]
and $m$, $v$, and $Z_n$ are suitably bounded on $\Theta_n \cap \Z_n(U^n)^c$, we immediately get 
\[ D_n^{(\eta)} \geq \int_{\Theta_n \cap \Z_n(U^n)^c} e^{-\eta n m(\theta, \theta^\star) - \eta \{n v(\theta, \theta^\star)\}^{1/2} Z_n(\theta)} \, \Pi(d\theta) \geq e^{-2b_n \eta n \eps_n^\alpha} \Pi\{\Theta_n \cap \Z_n(U^n)^c\}. \]
From this lower bound, we get  
\begin{align*}
P^n\{D_n^{(\eta)} \leq \tfrac12 \Pi(\Theta_n) e^{-2b_n\eta n \eps_n^\alpha}\} & \leq P^n\bigl[ e^{-2b_n \eta n \eps_n^\alpha} \Pi\{\Theta_n \cap \Z_n(U^n)^c\} \leq \tfrac12 \Pi(\Theta_n) e^{-2b_n\eta n \eps_n^\alpha} \bigr] \\
& = P^n\bigl[ \Pi\{\Theta_n \cap \Z_n(U^n)\} \geq \tfrac12 \Pi(\Theta_n) \bigr] \\
& \leq \frac{2 P^n \Pi\{\Theta_n \cap \Z_n(U^n)\}}{\Pi(\Theta_n)},
\end{align*}
where the last line is by Markov's inequality.  We can then simplify the expectation in the upper bound displayed above using Fubini's theorem:
\begin{align*}
P^n \Pi\{\Theta_n \cap \Z_n(U^n)\} & = \int \int 1\{\theta \in \Theta_n \cap \Z_n(U^n)\} \, \Pi(d\theta) \, P^n(dU^n) \\
& = \int \int 1\{\theta \in \Theta_n\} \, 1\{\theta \in \Z_n(U^n)\} \, P^n(dU^n) \, \Pi(d\theta) \\
& = \int_{\Theta_n} P^n\{\Z_n(\theta)\} \, \Pi(d\theta).
\end{align*}
By Chebyshev's inequality, $P^n\{\Z_n(\theta)\} \leq (b_n n\eps_n^\alpha)^{-1}$, and hence  
\[ P^n\{D_n^{(\eta)} \leq \tfrac12 \Pi(\Theta_n) e^{-2b_n\eta n \eps_n^\alpha}\} \leq 2(b_n n\eps_n^\alpha)^{-1}. \]
Finally, since $\alpha \geq 2\beta$ implies $n\eps_n^\alpha \geq 1$ and $b_n \to \infty$, the upper bound is vanishing, which proves the claim. 
\end{proof}

Lemma~\ref{lem:conc_num} below is a strengthening of Lemma~\ref{lem:num_cons}, and is used to obtain an upper bound on the numerator $N_n(A_n)$ with probability approaching 1.  First, we need to define some notation: let $\mathbb{P}_n = n^{-1}\sum_{i=1}^n \delta_{T_i}$ denote the empirical distribution where $\delta_{t}$ is the Dirac point-mass at $t$.  Let $\mathbb{G}_nf = n^{1/2}(\mathbb{P}_nf - Pf)$ denote the empirical process.     

\begin{lem}
\label{lem:conc_num}
Under the conditions of Theorem~\ref{thm:conc}, 
\[ N_n(A_n) \lesssim (M_n \eps_n)^q e^{-\eta c M_n^\alpha n \eps_n^\alpha}, \quad \text{with $P^n$-probability $\to 1$}, \]
%\[N_n(A_n) = o(\eps_n^q)\]
%with $P$-probability converging to $1$ as $n\rightarrow  \infty$.  
where $c > 0$ is a constant. 
\end{lem}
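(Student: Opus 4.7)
The natural approach is a peeling argument in the style of \citet[][Sec.~5.8]{vaart1998}. I would partition the complement $A_n=\{\theta:d(\theta,\theta^\star)>M_n\eps_n\}$ into shells
\[
S_j=\{\theta:jM_n\eps_n<d(\theta,\theta^\star)\leq (j+1)M_n\eps_n\},\quad j=1,2,\ldots,
\]
and decompose the exponent of the Gibbs integrand as
\[
n\{R_n(\theta)-R_n(\theta^\star)\}=n\, m(\theta,\theta^\star)+n^{1/2}\mathbb{G}_n(\ell_\theta-\ell_{\theta^\star}).
\]
On $S_j$, the separation condition \eqref{eq:conc_num1} lower-bounds the first term by a constant multiple of $n(jM_n\eps_n)^\alpha$, while the modulus condition \eqref{eq:conc_num2}, applied to the enclosing ball of radius $(j+1)M_n\eps_n$, bounds $P\sup_{\theta\in S_j}|\mathbb{G}_n(\ell_\theta-\ell_{\theta^\star})|\lesssim ((j+1)M_n\eps_n)^\beta$.

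Next I would turn the latter into a high-probability uniform estimate via Markov's inequality with a threshold $K_j=b_n(j+1)^2$, say, where $b_n\to\infty$ slowly. A union bound over $j$ then yields an event $\Omega_n$ with $P^n(\Omega_n)\to 1$ on which, for every shell,
\[
\sup_{\theta\in S_j}|\mathbb{G}_n(\ell_\theta-\ell_{\theta^\star})|\leq b_n(j+1)^{2+\beta}(M_n\eps_n)^\beta.
\]
Using the rate identity $n^{1/2}\eps_n^{\alpha-\beta}=1$ (equivalently $n\eps_n^\alpha=n^{1/2}\eps_n^\beta$), the separation and fluctuation contributions on $S_j$ are both of order $n\eps_n^\alpha$, and their ratio equals $j^\alpha M_n^{\alpha-\beta}/(b_n(j+1)^{2+\beta})$ up to constants. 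The hypothesis $\alpha\geq 2\beta$ in Theorem~\ref{thm:conc} gives $\alpha>\beta$, so taking $M_n$ sufficiently large (or diverging faster than any needed rate) forces the separation to dominate the fluctuation uniformly in $j\geq 1$, leaving a lower bound on the exponent of the form $c\eta\, j^\alpha M_n^\alpha n\eps_n^\alpha$ on $\Omega_n$.

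For the prior mass, the standing finite-dimensional setup $\Theta\subseteq\RR^q$, together with the mild regularity implicit in \eqref{eq:conc_den} that the prior admits a bounded Lebesgue density near $\theta^\star$, yields $\Pi(S_j)\lesssim ((j+1)M_n\eps_n)^q$. Putting the pieces together and summing,
\[
N_n^{(\eta)}(A_n)\leq\sum_{j\geq 1}\Pi(S_j)\sup_{\theta\in S_j}e^{-\eta\{nR_n(\theta)-nR_n(\theta^\star)\}}\lesssim (M_n\eps_n)^q\sum_{j\geq 1}(j+1)^q e^{-\eta c\, j^\alpha M_n^\alpha n\eps_n^\alpha},
\]
which is dominated by its $j=1$ term because the exponential decay overwhelms polynomial growth; this delivers the claimed bound.

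The main obstacle is step two: constructing a single high-probability event on which the empirical process is simultaneously controlled across infinitely many shells. The multipliers $K_j$ must be summable in $j$ (so that the union bound costs only $O(b_n^{-1})$), yet slow enough that the polynomial factor $K_j(j+1)^\beta$ remains dominated by $j^\alpha M_n^{\alpha-\beta}$—this is precisely where the gap between $\alpha$ and $\beta$ afforded by $\alpha\geq 2\beta$ is used. A secondary, more cosmetic concern is how far out the shells must extend; if $\Theta$ is unbounded and the prior has unbounded support, one either notes that the tails contribute negligibly (by a crude separation bound plus the prior's summability) or localizes to a large but bounded region on an event of $P^n$-probability approaching one.
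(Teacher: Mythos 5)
Your overall architecture is the same as the paper's: peel $A_n$ into shells, lower-bound the centered risk on shell $j$ by a constant times $(jM_n\eps_n)^\alpha$ via \eqref{eq:conc_num1}, control the empirical-process fluctuation on the enclosing ball via \eqref{eq:conc_num2} and Markov's inequality, bound the prior mass of each shell by its Lebesgue volume $\{(j+1)M_n\eps_n\}^q$, and sum a geometrically decaying series. The one place you diverge---introducing the summable multipliers $K_j=b_n(j+1)^2$ so that the union bound over infinitely many shells closes---is where the argument breaks. On shell $j$ your fluctuation bound is $b_n(j+1)^{2+\beta}M_n^\beta\,n\eps_n^\alpha$ (using $n^{1/2}\eps_n^\beta=n\eps_n^\alpha$), while the separation term is $Cj^\alpha M_n^\alpha\,n\eps_n^\alpha$; their ratio $j^\alpha M_n^{\alpha-\beta}/\{b_n(j+1)^{2+\beta}\}$, which you compute correctly, tends to zero as $j\to\infty$ for every fixed $n$ whenever $\alpha<2+\beta$. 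The hypothesis $\alpha\geq2\beta$ does not exclude this: the canonical regular case $\alpha=2$, $\beta=1$ has $\alpha=2<3=2+\beta$. So no choice of $M_n$ makes the separation dominate ``uniformly in $j\geq1$''; on distant shells your bound on the exponent turns positive and the series is no longer controlled. The gap between $\alpha$ and $\beta$ lets $j^\alpha$ absorb a $(j+1)^\beta$ factor, not a $(j+1)^{2+\beta}$ factor.

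The paper's proof uses a single threshold $a_n\to\infty$ with $a_n\ll M_n^{\alpha-\beta}$ on every shell, for which the analogous ratio $j^\alpha M_n^{\alpha-\beta}/\{a_n(j+1)^\beta\}$ is bounded below uniformly in $j$ because $\alpha>\beta$; the price is that the per-shell failure probability $O(1/a_n)$ is not summable over shells, i.e., the simultaneity issue you correctly flag is left implicit there. The standard ways to actually close it are the ones you dismiss as a ``cosmetic concern'': either first localize the posterior to a fixed bounded set (so that only finitely many shells---order $(M_n\eps_n)^{-1}$---occur and a common threshold suffices), or upgrade \eqref{eq:conc_num2} to an exponential-tail maximal inequality so that per-shell exceedance probabilities are summable without growing multipliers. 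Either repair keeps the rest of your argument intact; what cannot stand is the claim that $\alpha\geq2\beta$ plus large $M_n$ absorbs the $(j+1)^2$ union-bound factor.
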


\begin{proof}
Start by expressing the empirical risk difference as 
\[ R_n(\theta^\star) - R_n(\theta) = \{R(\theta^\star) - R(\theta) - n^{-1/2}\mathbb{G}_n(\ell_\theta - \ell_{\theta^\star})\}. \]
Then the Gibbs posterior numerator can be written as
\begin{align*}
N_n^{(\eta)}(A_n) & = \sum_{t=1}^\infty \int_{t M_n \eps_n < d(\theta, \theta^\star) < (t+1) M_n \eps_n} e^{-\eta n \{R_n(\theta) - R_n(\theta^\star)\}} \, \Pi(d\theta) \\
& = \sum_{t=1}^\infty \int_{t M_n \eps_n < d(\theta, \theta^\star) < (t+1) M_n \eps_n} e^{-\eta n \{R(\theta^\star) - R(\theta) - n^{-1/2}\mathbb{G}_n(\ell_\theta - \ell_{\theta^\star})\}} \, \Pi(d\theta).
\end{align*}
By \eqref{eq:conc_num1}, the right-hand side above can be upper bounded as 
\[ N_n^{(\eta)}(A_n) \leq \sum_{t=1}^\infty e^{-\eta C t^\alpha M_n^\alpha n \eps_n^\alpha} \int_{d(\theta, \theta^\star) < (t+1) M_n \eps_n} e^{\eta n^{1/2} |\GG_n(\ell_\theta - \ell_{\theta^\star})|} \, \Pi(d\theta), \]
where $C > 0$ is the constant hidden in ``$\gtrsim$'' in \eqref{eq:conc_num1}.  Next, it follows immediately from \eqref{eq:conc_num2} and Markov's inequality that, for any $a_n \to \infty$, 
\[ \sup_{d(\theta, \theta^\star) < (t+1)M_n \eps_n} |\GG_n(\ell_\theta - \ell_{\theta^\star})| \lesssim a_n \{(t+1) M_n \eps_n\}^\beta, \]
with $P^n$-probability converging to 1 as $n \to \infty$.  This uniform bound on the integrand above leads to 
\begin{align*}
N_n^{(\eta)}(A_n) & \leq \sum_{t=1}^\infty e^{-\eta C t^\alpha M_n^\alpha n \eps_n^\alpha + \eta n^{1/2} a_n \{(t+1) M_n \eps_n\}^\beta} \, \Pi\{d(\theta, \theta^\star) < (t+1) M_n \eps_n\} \\
& \leq (M_n \eps_n)^q \sum_{t=1}^\infty (t+1)^q e^{-\eta C t^\alpha M_n^\alpha n \eps_n^\alpha + \eta n^{1/2} a_n \{(t+1) M_n \eps_n\}^\beta}, 
\end{align*}
where the second inequality follows by the condition on the prior distribution $\Pi$.  Next, the term in the exponent can be upper bounded by  
\[ -\eta C t^\alpha M_n^\alpha n \eps_n^\alpha [ 1 - C' a_n \{(t+1) M_n \eps_n\}^{\beta-\alpha} n^{-1/2}]. \]
Since $\eps_n^{\beta-\alpha} = n^{1/2}$ and we are free to choose $a_n \ll M_n^{\alpha-\beta}$, it follows that the term in square brackets is bigger than, say, $\frac12$ for all sufficiently large $n$.  Therefore,  
\[ N_n^{(\eta)}(A_n) \lesssim (M_n \eps_n)^q \sum_{t=1}^\infty t^q e^{-\eta (C/2) t^\alpha M_n^\alpha n \eps_n^\alpha}, \quad \text{with $P^n$-probability $\to 1$}. \]
The summation is of the order $e^{-\eta c M_n^\alpha n \eps_n^\alpha}$, which proves the claim. 
\end{proof}

\ifthenelse{1=1}{}{
{\color{red}
{\em Bounding the posterior numerator.} Do the shells thing as usual and then apply the bounds in the assumptions about the loss/risk function:
\begin{align*}
N_n^{(\eta)}(A_n) & = \sum_t \int_{t M_n \eps_n < d(\theta, \theta^\star) < (t+1) M_n \eps_n} e^{-\eta n \{R_n(\theta) - R_n(\theta^\star)\}} \, \Pi(d\theta) \\
& \leq \sum_t e^{-\eta C t^\alpha M_n^\alpha n \eps_n^\alpha} \int_{d(\theta, \theta^\star) < (t+1) M_n \eps_n} e^{\eta n^{1/2} |\GG_n(\ell_\theta - \ell_{\theta^\star})|} \, \Pi(d\theta) \\
& \leq \sum_t e^{-\eta C t^\alpha M_n^\alpha n \eps_n^\alpha + \eta C' n^{1/2} (t+1)^\beta M_n^\beta \eps_n^\beta} \Pi\{d(\theta, \theta^\star) < (t+1) M_n \eps_n\}.
\end{align*}
If the dimension of the $\theta$-space is $q$, then the prior term can be bounded as 
\[ \Pi\{d(\theta, \theta^\star) < (t+1) M_n \eps_n\} \lesssim \{(t+1) M_n \eps_n\}^q. \]
The exponent in the above display is messy, but I'll try to simplify:
\[ -\eta C n M_n^\alpha \eps_n^\alpha t^\alpha \bigl\{ 1 - C'' (t M_n \eps_n)^{\beta-\alpha} n^{-1/2} \bigr\}. \]
Since $\alpha > \beta$ and $n \to \infty$, the term in brackets will eventually be bigger than, say, $\frac12$, so the right-hand side of the big display above can be upper bounded by 
\[ (M_n \eps_n)^q \sum_t (t+1)^q e^{-\eta (C/2) n M_n^\alpha \eps_n^\alpha \, t^\alpha}. \]
The sum is clearly finite for all $n$ and is $o(1)$ as $n \to \infty$.  However, with a little care, I think we can say that it's actually ``exponentially small,'' so that 
\[ N_n^{(\eta)}(A_n) \leq (M_n \eps_n)^q e^{-K \eta n M_n^\alpha \eps_n^\alpha}, \]
and this is with $P$-probability converging to 1 as $n \to \infty$. 
}}

The Gibbs posterior concentration rate result follows directly from Lemmas~\ref{lem:conc_den}--\ref{lem:conc_num}.  Indeed, with $P^n$-probability converging to 1, we have 
\[ \Pi_n^{(\eta)}(A_n) = \frac{N_n^{(\eta)}(A_n)}{D_n^{(\eta)}} \lesssim \frac{(M_n \eps_n)^q e^{-\eta c M_n^\alpha n \eps_n^\alpha}}{\eps_n^q e^{-2\eta b_n n \eps_n^\alpha}} = M_n^q e^{-\eta(c M_n^\alpha - 2b_n) n \eps_n^\alpha}. \]
We are free to choose $b_n$ as small as we like and, if we take $b_n \ll M_n^\alpha$, then the upper bound vanishes, proving Theorem~\ref{thm:conc}.

% {\color{magenta}
% Write the Gibbs posterior probability of the set $A_n := \{\theta:d(\theta, \theta^\star) > M_n\eps_n\}$ for some diverging sequence $M_n>0$ as the fraction
% \[\Pi_n(A_n) = \frac{N_n(A_n)}{D_n} = \frac{\int_{A_n} e^{-\eta n[R_n(\theta) - R_n(\theta^\star)]}\Pi(d\theta)}{\int e^{-\eta n[R_n(\theta) - R_n(\theta^\star)]}\Pi(d\theta)}.\]
% Recall Lemma~\ref{lem:conc_den} along with \eqref{eq:conc_den} imply $D_n \gtrsim \eps_n^q e^{-2Q\eta}$ with probability at least $1-2Q^{-1}$.  And, Lemma~\ref{lem:conc_num} says $N_n(A_n) = o(\eps_n^q)$ with $P$-probability converging to $1$. Therefore,
% \begin{align}
% \label{eq:ratio_bound}
%     \Pi_n(A_n) &\lesssim \frac{o(\eps_n^q)}{\eps_n^q e^{-2Q\eta}}
% \end{align}
% with probability at least $1-2Q^{-1}-\delta$ for any $\delta>0$ for all sufficiently large $n$.  Concentration at rate $\eps_n$ follows from the facts that the upper bound in \eqref{eq:ratio_bound} vanishes for any fixed $Q$ as $n\rightarrow\infty$ and the probability $1-2Q^{-1}-\delta$ can be made arbitrarily close to $1$ as $n\rightarrow \infty$ by taking $Q$ sufficiently large.
% }

\bibliographystyle{apalike}
\bibliography{bib}

\end{document}